\renewcommand{\theequation}{\thesection.\arabic{equation}}
\newtheorem{theorem}{Theorem}
\newtheorem{lemma}{Lemma}
\newtheorem{proposition}{Proposition}
\newtheorem{remark}{Remark}
\newtheorem{definition}{Definition}
\newtheorem{assumption}{Assumption}
\numberwithin{theorem}{section}
\numberwithin{lemma}{section}
\numberwithin{corollary}{section}
\numberwithin{proposition}{section}
\numberwithin{remark}{section}
\newcommand{\bs}[1]{\boldsymbol{#1}}
\title{
Eigenvalues of quantum walk induced by recurrence properties of the underlying birth and death process: application to computation of an edge state
}
\author
{
{Yusuke IDE,$^{1}$\footnote{E-mail: ide@neptune.kanazawa-it.ac.jp} \quad Norio KONNO,$^{2}$\footnote{E-mail: konno@ynu.ac.jp} \quad Etsuo SEGAWA$^{3}$\footnote{E-mail: segawa-etsuo-tb@ynu.ac.jp}}\\
\noindent \\
{\scriptsize {}$^{1}$ \textit{Math. and Science Education Research Center, Kanazawa Institute of Technology, }}\\  
{\scriptsize \textit{Nonoichi 921-8501, Japan}}\\
\noindent \\
{\scriptsize {}$^{2}$ \textit{Department of Applied Mathematics, Faculty of Engineering,}}\\  
{\scriptsize {}$^{3}$ \textit{Graduate School of Education Center}, {}$^{3}$ \textit{Graduate School of Environment and Information Sciences, }}\\ 
{\scriptsize \textit{Yokohama National University,}}\\ 
{\scriptsize \textit{Yokohama 240-8501, Japan}}\\
}
\date{}
\begin{document}
\maketitle
\noindent
\begin{small}
\textbf{Abstract. }
In this paper, we consider an extended coined Szegedy model and discuss the existence of the point spectrum of induced quantum walks in terms of recurrence properties of the underlying birth and death process. 
We obtain that if the underlying random walk is not null recurrent, then the point spectrum exists in the induced quantum walks. 
As an application, we provide a simple computational way of the dispersion relation of the edge state part for the topological phase model driven by quantum walk 
using the recurrence properties of underlying birth and death process. 
\end{small}
\section{Introduction}
The Szegedy model~\cite{Sze} is one of the intensively investigated quantum walk models because 
the eigenproblem 
is reduced to that of the underlying random walk. 
Using this fact of the spectrum, we can see the effectiveness of quantum search driven by this quantum walks on some finite graphs~e.g., \cite{PorBook} and references therein, 
and also limit distribution on some infinite graphs~e.g., \cite{HKSS}. 
The spectral map from the induced quantum walk to the underlying random walk is obtained by the Joukowsky transform~\cite{HKSS}. 
In that sense, quantum walk contains all the information of the underlying random walk. 
Then the following natural question arises: 
\begin{center} {\it How does the underlying random walk's behavior affect the induced quantum walk ?}\end{center}
As a trial to answer this question, in this paper, we treat recurrence properties of the underlying random walk on the half line. 
The recurrence property of random walk can be classified into the three classes: positive recurrent; null recurrent; and transient~\cite{Schi}
(see Definition~\ref{def:RW} for more detail). 
In this paper, we treat a coined walk which is converted from the Szegedy model
\footnote{In \cite{KPSS}, it is shown that coined quantum walks are unitarily equivalent to several kinds of quantum walk models 
such as $2$-staggered walk~\cite{Por}, the Szegedy model~\cite{Sze}, split step quantum walks and so on 
by some graph deformations of the original graph~\cite{KPSS}. }.
Here a coined walk is defined by pair of a connected graph 
$G=(V, E)$ and a sequence of unitary matrices $(C_u)_{u\in V}$ assigned to each vertex. 
The detailed definition of this coined walk, named extended coined-Szegedy model and the underlying random walk is denoted by Section~\ref{Sec:2}. 
Our obtained main result is as follows. 
\begin{theorem}\label{thm:main}
For the extended coined-Szegedy model on the half line, 
if the underlying random walk is not null recurrent, then the induced quantum walk has point spectrum. 
\end{theorem}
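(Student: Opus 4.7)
The strategy is to exploit the spectral map that is built into every Szegedy-type construction: eigenvalues $z$ of the quantum walk $U$ on the unit circle correspond, via the Joukowsky transform $\lambda=(z+z^{-1})/2$, to eigenvalues $\lambda$ of the self-adjoint discriminant (Jacobi-type) operator $T$ obtained by symmetrizing the birth-death transition matrix against its stationary measure $\mu_n=\prod_{k=1}^{n}p_{k-1}/q_k$. Producing a point eigenvalue of $U$ thus reduces to producing an $\ell^{2}$ eigenvector of $T$, plus a check that the coin condition at the boundary vertex $0$ can be matched. I plan to carry this reduction out by quoting the machinery of Section~\ref{Sec:2}, after which all remaining work takes place inside the eigenvalue equation $Tf=\lambda f$.

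The equation $Tf=f$ (the Joukowsky lift to $z=\pm 1$) is a second-order linear recurrence whose two linearly independent solutions can be written as the constant function and the partial-sum function $n\mapsto \sum_{k<n}\rho_k$, with $\rho_n=\prod_{k=1}^n q_k/p_k$. In the positive recurrent case one has $\sum_n\mu_n<\infty$, so the constant solution already lies in $\ell^{2}(\mu)$; this places $\lambda=1$ in $\sigma_p(T)$ and hence $z=\pm 1$ in $\sigma_p(U)$. In the transient case one has $\sum_n\rho_n<\infty$, so the complementary ``tail'' function $f(n):=\sum_{k\geq n}\rho_k$ is a bounded solution that decays to $0$; combining the transience sum with the algebraic identity $\rho_n\mu_n=p_0/p_n$ (plus Abel summation to swap the nested sums) should give $f\in\ell^{2}(\mu)$ and again $\lambda=1\in\sigma_p(T)$. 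In either scenario the normalization freedom of the harmonic function is expected to absorb the boundary coin $C_0$.

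The principal obstacle is the transient case: although $\rho_n$ is summable, the weight $\mu_n=p_0/(p_n\rho_n)$ grows at essentially the reciprocal rate, so $\ell^{2}$-membership of the tail function is delicate in the marginally transient regime (polynomially summable $\rho_n$). I would approach this by first treating the strongly transient case (geometric $\rho_n$) as a warm-up, and then refine the estimate using the generating-function interpretation of transience, namely $G(0,0;1)=\sum_n P^n(0,0)<\infty$, to bound the relevant double sum. A secondary subtlety is verifying the boundary matching at $n=0$; should the normalized harmonic function fail to meet it directly, a complementary eigenvector from $\Ker(I\mp U)$ associated with the free coin at the origin is expected to supply the required point eigenvalue, consistent with the edge-state interpretation announced in the abstract.
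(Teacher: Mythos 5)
Your positive recurrent half is essentially the paper's own argument: positive recurrence is exactly $c_R<\infty$, i.e.\ the reversible measure $m_V$ of (\ref{eq:revmeasure}) is square summable, and then $Km_V$ (your constant harmonic function, viewed in the weighted picture) is an eigenvector of $U$ inside $\mathcal{L}=K\ell^2(V)+SK\ell^2(V)$; this is Lemma~\ref{lem:recQW} together with Proposition~\ref{prop:1} (the eigenvalue being $-ie^{i\phi}$ rather than $\pm1$ is only the rotation of Remark~\ref{rem:rotate}). The transient half, however, has a genuine gap, and the route you chose cannot be repaired. First, your tail function $f(n)=\sum_{k\geq n}\rho_k$ is not an eigenfunction of $T$ at all: at the boundary vertex the walk holds with probability $q_0$, so $Pf=f$ at $0$ forces $f(1)=f(0)$, while $f(0)-f(1)=\rho_0\neq 0$; solving the difference equation with this boundary condition shows that the only harmonic functions on the half line are the constants, so no normalization freedom can absorb the mismatch, and in the transient case ($c_R=\infty$) the constant is not in $\ell^2(\mu)$, whence $1\notin\sigma_p(T)$ (this is exactly the ``only if'' content of Lemma~\ref{lem:recQW}). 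Second, even ignoring the boundary, the $\ell^2(\mu)$ estimate you hope to force through generating functions is false: with $\rho_n\asymp n^{-(1+\epsilon)}$ and $p_n$ bounded away from $0$ and $1$ one has $\mu_n=p_0/(p_n\rho_n)\asymp n^{1+\epsilon}$ and $f(n)\asymp n^{-\epsilon}$, so $f(n)^2\mu_n\asymp n^{1-\epsilon}$ is not summable although the walk is transient.

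The mechanism the paper actually uses in the transient case is different in kind, and this is the main point of the theorem: the eigenvector does not come from $\sigma_p(T)$ via any spectral map, but lives in $\mathcal{L}^\perp$. Concretely, one builds an arcwise function $\psi$ with $S\psi=\psi$ (symmetry) and $K^*\psi=0$ (the Kirchhoff condition (\ref{Kirch})), determined recursively by $\sqrt{q_j}\,\psi(j;L)+\sqrt{p_j}\,\psi(j;R)=0$ as in (\ref{eq:patapata}); this gives $|\psi(j;R)|^2=q_0\cdots q_j/(p_0\cdots p_j)$, so $\|\psi\|^2=1+c_T$ is finite exactly when the walk is transient (Theorem~\ref{thm:RW}), and by Lemma~2.1 such a $\psi$ is an eigenvector of $U$ with the ``birth'' eigenvalue $\pm e^{i\sigma}$ (in the rotated normalization of Proposition~\ref{prop:2}, $ie^{-i\phi}$), which is not in the Joukowsky image of $\sigma_p(T)$. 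Your closing remark about a ``complementary eigenvector from $\ker(I\mp U)$'' gestures toward this, but it contains none of the construction, does not use transience anywhere, and places the eigenvalue at the wrong point of the circle; as written, the transient case of the theorem is not proved.
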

Therefore if the underlying random walk does not have continuous singular spectrum and not null recurrent, then localization happens~\cite{SuzSeg}. 
Moreover if the underlying random walk is positive recurrent, then the point spectrum of the quantum walk 
is derived from the stationary distribution of this random walk (see Proposition~\ref{prop:1}). 
On the other hand, if the underlying random walk is transient, then the point spectrum of the quantum walk 
is derived from the finite energy flow~\cite{HS} which cannot be expressed by the eigenspace inherited by the random walk (see Proposition~\ref{prop:2}). 
As an application of this theorem, we compute the dispersion relation of the topological phase simulator model 
driven by an alternative quantum walk~\cite{AE} which is determined by the parameters $\alpha,\beta\in [0,2\pi)$. 
We show that the edge state of this model is described by the recurrence properties of the underlying random walk (see Theorem~\ref{thm:dispersion} and also Fig.~\ref{fig:dispersion}). 
Since the recurrence property of the random walk is determined by the global structure (see Theorem~\ref{thm:RW}), 
then the edge state is stable to small spatial perturbation of parameters $\alpha,\beta$ with respect to the horizontal direction. 

This paper is organized as follows. 
In Section~2, the definition of our quantum walk model is explained and also we introduce an extended detailed balanced condition to connect the underlying random walk. 
In Section~3, we restrict the treated graph to the half line. 
We show the existence of the point spectrum of the induced quantum walk in terms of the underlying random walk's behavior 
which is corresponding to the proof of Theorem~\ref{thm:main}. 
In Section~4, the result is applied to the topological phase model, and we obtain another aspect of the edge state from the random walk's point of view. 
\section{Extended coined-Szegedy model}\label{Sec:2}
Let $G=(V,E)$ be a connected graph which may has infinite number of vertices but whose degree is uniformly bounded. 
The set of symmetric arcs induced by the edge set $E$ is denoted by $A$.
Here if $\{u,v\}\in E$, then the induced symmetric arcs are $(u,v)$ and $(v,u)$ which represent the arcs from the origin vertex $u$ to 
the terminal vertex $v$ and vice versa, respectively. 
The inverse arc of $a\in A$ is $\bar{a}$ and the origin and terminus vertices of $a$ are $o(a)$ and $t(a)$, respectively. 
Remark that $t(\bar{a})=o(a)$ and $o(\bar{a})=t(a)$ hold.
The support edge of $a$ is denoted by $|a|\in E$; that is, for $a=(u,v)\in A$, $|a|=|\bar{a}|=\{u,v\}$. 

The total Hilbert space of quantum walk is 
$\mathcal{A}:=\ell^2(A)$. The inner product is standard. 
We take the subspace $\mathcal{A}_u=\{\psi\in \mathcal{A} \;|\; t(a)\neq u \Rightarrow \psi(a)=0 \}$, and it holds 
$\mathcal{A}=\oplus_{u\in V}\mathcal{A}_u$. 
The local unitary operator on $\mathcal{A}_u$ is denoted by $C_u$ and we describe the coin operator $C=\oplus_{u\in V}C_u$ 
under the decomposition of $\mathcal{A}=\oplus_{u\in V}\mathcal{A}_u$. 

Let $S$ be the shift operator such that $(S\psi)(a)=\psi(\bar{a})$. 
Then the total time evolution of our quantum walk is $U=SC$. 
To extract some relation between quantum walk and random walk behaviors,  
we take the following assumption which is motivated by \cite{HIKST}: 
\begin{assumption}\label{Ass}
Let $\sigma\in (0,2\pi)$ be a real valued constant. 
\begin{enumerate}
\item The eigenvalues of $C_u$ are given by $\mathrm{Spec}(C_u)=\{1,e^{i\sigma}\}$ for any $u\in V$;  
\item $\dim\ker(1-C_u)=1$
\end{enumerate}
\end{assumption}
\begin{remark}\label{rem:rotate}
\noindent
\begin{enumerate}
\item If $\sigma=\pi$, then the coined-Szegedy model appears~\cite{Seg}. 
\item We call Assumption 1' by just replacing Assumption~\ref{Ass} (1) into $\sigma(C_u)=\{e^{i\sigma_1},e^{i\sigma_2}\}$. 
This model can be represented by the $\sigma=\sigma_2-\sigma_1$ case with the 
modification of $U'=e^{\sigma_2}U$ for the Assumption~1 model. 
\end{enumerate}
\end{remark}
For $u\in V$, let $\alpha_u\in \ell^2(\mathcal{A}_u)$ with $||\alpha_u||=1$ be normalized $(+1)$-eigenvector of $C_u$; that is, 
$\ker(1-C_u)=\mathbb{C}\{\alpha_u\}$. 
Using $\alpha_u$, we define a weight $\alpha: A\to \mathbb{C}$ on each symmetric arc by $\sum_{t(a)=u}|\alpha(a)|^2=1$ for any $u\in V$
such that $\alpha(a)=\alpha_{t(a)}(a)$. 
Let us $K:\ell^2(V)\to\ell^2(A)$ such that $(Kf)(a)=\alpha(a)f(t(a))$. The adjoint $K^*$ is described by $(K^*\psi)(u)=\langle \alpha_u, \psi \rangle$. 
Note that $K$ is an isometric operator, that is, $||Kf||^2=||f||^2$. 
Then we have the following lemma. 
\begin{lemma}
Let $\mathcal{L}\subset \mathcal{A}$ be denoted by $K\ell^2(V)+SK\ell^2(V)$. Then we have 
\begin{enumerate}
\item $U(\mathcal{L}) = \mathcal{L}$; 
\item $\mathcal{L}^\perp = \left(\ker(1-S)\cap \ker K^*\right) \oplus \left(\ker(-1-S)\cap \ker K^*\right)$ holds. Under this decomposition, we have 
$U|_{\mathcal{L}^\perp}=e^{i\sigma}\oplus (-e^{i\sigma})$. 
\end{enumerate}
\end{lemma}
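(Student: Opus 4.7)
The plan is to first recast the coin $C$ in projection form, then compute $\mathcal{L}^\perp$ directly, and finally promote the spectral description to invariance of $\mathcal{L}$ via unitarity.

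The first step is to use Assumption~\ref{Ass} together with $\ker(1-C_u)=\mathbb{C}\{\alpha_u\}$ to write each local coin as $C_u = |\alpha_u\rangle\langle\alpha_u| + e^{i\sigma}\bigl(I_{\mathcal{A}_u} - |\alpha_u\rangle\langle\alpha_u|\bigr)$. Since $K$ is isometric and $KK^*$ equals the orthogonal projection onto $K\ell^2(V)=\bigoplus_u \mathbb{C}\{\alpha_u\}$, this assembles into the global identity $C = KK^* + e^{i\sigma}(I-KK^*)$. Two consequences I would extract immediately are $CK=K$, which gives $UK\ell^2(V)=SCK\ell^2(V)=SK\ell^2(V)$, and $C\psi=e^{i\sigma}\psi$ for every $\psi\in\ker K^*$.

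Next I would prove (2). The shift satisfies $S^2=I$ (because $\bar{\bar a}=a$), so $S$ is a self-adjoint unitary with ambient decomposition $\mathcal{A}=\ker(1-S)\oplus\ker(1+S)$. Isometry of $K$ and unitarity of $S$ yield $(K\ell^2(V))^\perp=\ker K^*$ and $(SK\ell^2(V))^\perp = S\ker K^*$, so $\mathcal{L}^\perp = \ker K^*\cap S\ker K^*$. For any $\psi\in\mathcal{L}^\perp$, both $\psi$ and $S\psi$ lie in $\ker K^*$, hence so do $\psi_\pm:=\tfrac12(\psi\pm S\psi)\in\ker(1\mp S)\cap\ker K^*$; this produces the claimed orthogonal decomposition. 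On each summand, the action of $U$ reads off at once: for $\psi\in\ker K^*$ the identity from Step~1 gives $C\psi=e^{i\sigma}\psi$, and $S\psi=\pm\psi$ then yields $U\psi=SC\psi=\pm e^{i\sigma}\psi$, which is exactly $U|_{\mathcal{L}^\perp}=e^{i\sigma}\oplus(-e^{i\sigma})$.

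Statement (1) then falls out from (2): $\mathcal{L}^\perp$ has just been exhibited as an orthogonal sum of eigenspaces of the unitary $U$, so $U\mathcal{L}^\perp=\mathcal{L}^\perp$, and unitarity gives $U\mathcal{L}=\mathcal{L}$. There is no genuine obstacle in the argument; the only point that needs a moment of care is closedness, namely that $K\ell^2(V)$ and $SK\ell^2(V)$ are closed (which they are, since $K$ is isometric and $S$ unitary), so that $\mathcal{A}=\overline{\mathcal{L}}\oplus\mathcal{L}^\perp$ can be used without fuss and the two parts of the lemma are genuine complementary statements.
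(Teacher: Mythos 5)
Your treatment of part (2) is correct and is essentially the paper's argument in a slightly cleaner form: the paper also writes $C=KK^*+e^{i\sigma}(1-KK^*)$ and splits $\mathcal{L}^\perp$ along the involution $S$, so no issue there. The problem is your deduction of part (1) from part (2). From $U\mathcal{L}^\perp=\mathcal{L}^\perp$ and unitarity you obtain $U\bigl(\mathcal{L}^\perp\bigr)^\perp=\bigl(\mathcal{L}^\perp\bigr)^\perp$, and $\bigl(\mathcal{L}^\perp\bigr)^\perp=\overline{\mathcal{L}}$; this proves invariance of the \emph{closure} of $\mathcal{L}$, not of $\mathcal{L}=K\ell^2(V)+SK\ell^2(V)$ itself, which is what the lemma asserts. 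Your justification---that $K\ell^2(V)$ and $SK\ell^2(V)$ are closed---does not close this gap: an algebraic sum of two closed subspaces of a Hilbert space is closed only when the Friedrichs angle between them is positive, which here amounts to $\mathrm{Spec}(T)$, $T=K^*SK$, staying away from $\pm1$ off the corresponding eigenspaces. In the very situations this paper cares about (birth and death chains on the half line; e.g.\ the null recurrent/free case, where the spectrum of the Jacobi-type operator $T$ reaches $\pm1$ without eigenvectors), the angle is zero and $\mathcal{L}$ is genuinely non-closed, so ``$U\overline{\mathcal{L}}=\overline{\mathcal{L}}$'' does not yield ``$U\mathcal{L}=\mathcal{L}$''.

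The paper avoids this by an algebraic intertwining: with $L[f\;\;g]^\top=Kf+SKg$ one computes $UL=L\tilde{T}_\sigma$, where
\[
\tilde{T}_\sigma=\begin{bmatrix} 0 & e^{i\sigma} \\ 1 & (1-e^{i\sigma})T \end{bmatrix},
\qquad
\tilde{T}_\sigma^{-1}=\begin{bmatrix} -e^{-i\sigma}(1-e^{i\sigma})T & 1 \\ e^{-i\sigma} & 0 \end{bmatrix},
\]
and since $\tilde{T}_\sigma$ is a bounded bijection of $\ell^2(V)\oplus\ell^2(V)$, the image $U\mathcal{L}=L\tilde{T}_\sigma(\ell^2(V)\oplus\ell^2(V))=L(\ell^2(V)\oplus\ell^2(V))=\mathcal{L}$, with no closedness needed. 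You can repair your proof within your own framework by the same computation: $CK=K$ gives $U(Kf)=SKf$, and $C=KK^*+e^{i\sigma}(1-KK^*)$ gives $U(SKg)=e^{i\sigma}Kg+(1-e^{i\sigma})SKTg$, so $U\mathcal{L}\subseteq\mathcal{L}$; then exhibit explicit preimages (equivalently, invert $\tilde{T}_\sigma$ as above) to get equality.
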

\begin{proof}
Remark that $K$ is an isometric operator, that is, $K^*K=1_V$. 
Using this fact, we obtain 
	\[ UL=L\tilde{T}, \] 
where $L:\ell^2(V)\oplus \ell^2(V)\to \ell^2(A)$ such that $L[f \;\; g]^\top=Kf+SKg$ for $f,g\in \ell^2 (V)$ and 
	\begin{equation}\label{eq:tildeT} 
        \tilde{T}_\sigma=\begin{bmatrix} 0 & e^{i\sigma} \\ 1 & (1-e^{i\sigma})T \end{bmatrix}. 
        \end{equation}
Here $T=K^*SK$. 
Since $\tilde{T}_\sigma$ has the inverse, then $\tilde{T}_\sigma:\ell^2(V)\oplus \ell^2(V)\to \ell^2(V)\oplus \ell^2(V)$ is bijection. 
Therefore the proof of part 1 is completed. 
For the part 2 of the proof, let us consider $\mathcal{L}^\perp$. 
It holds
\[ \mathcal{L}^\perp= [(\ker(K^*)\cap \ker(K^*S)) \cap \ker(1-S)] \oplus [(\ker(K^*)\cap \ker(SK^*)) \cap \ker(1+S)]. \]
Since $K^*S\psi=\pm K^*\psi=0$ for any $\psi \in \ker(K^*) \cap \ker(\pm 1-S)$, we have 
\[ (\ker(K^*)\cap \ker(K^*S)) \cap \ker(\pm 1-S)=\ker(K^*) \cap \ker(\pm 1-S). \]
For any $\varphi\in \ker(K^*) \cap \ker(\pm 1-S)$, it is easy to check that $U\psi=\pm e^{i\sigma} \psi$ since 
$U=S(KK^*+e^{i\sigma}(1-KK^*))$. 
Then it is completed the proof of part 2. 
\end{proof}
We put $T:=K^*SK$ which will play an important role. It is easy to see that 
	\[ (Tf)(u)=\sum_{a:t(a)=u}\alpha(\bar{a})\overline{\alpha(a)}f(o(a)). \]
\begin{definition}
If there is a function $m_V:V\to \mathbb{C}\setminus\{0\}$ such that 
\begin{equation}\label{eq:DBC} 
\alpha(\bar{a})m_V(o(a))=\alpha(a)m_V(t(a))\;({}^\forall a\in A),  
\end{equation}
then we say $T$ is reversible and $m_V$ is called a reversible measure.
\end{definition}
Let $P$ be the probability transition operator defined by 
	\[(Pf)(u)=\sum_{o(a)=u}p(a)f(t(a)),\] where $p(a):=|\alpha(\bar{a})|^2$. 
Here the matrix expression is $(P)_{u,v}=\sum_{o(a)=u,t(a)=v}|\alpha(\bar{a})|^2$, which describes the transition 
probability from $u$ to $v$. 
If $T$ is reversible, since $\alpha(\bar{a})\overline{\alpha(a)}=|\alpha(a)|^2 m_V(t(a))/m_V(o(a))$, then $T$ is reexpressed by 
$T=D^{-1}PD$, where $(Df)(u)=1/m_V(u)$.  

The $(\pm e^{i\sigma})$-eigenspace of $\mathcal{L}^\perp$ are denoted by $\mathcal{H}_\pm$, respectively. 
Then $\psi\in \mathcal{H}_\pm$ iff 
	\begin{align} 
        \psi(a)\pm \psi(\bar{a}) &= 0\;({}^\forall a\in A), \;\mathrm{(Skew\;symmetricity(+), Symmetricity(-), respectively)} \label{symmetry}\\
        \langle \alpha_u,\psi \rangle &= 0\;({}^\forall u\in V), \label{Kirch}
        \end{align}
respectively. 
If $T$ is reversible, we define $m_E:E\to \mathbb{C}$ by 
$m_E(e):=\alpha(\bar{a})m_V(o(a))=\alpha(a)m_V(t(a))$, where $|a|=e$. 
Then the condition (\ref{Kirch}) is reduced to 
	\begin{align}
        \sum_{t(a)=u}\psi(a)m_E(|a|) &= \sum_{o(a)=u}\psi(a)m_E(|a|)=0, \;(\mathrm{Kirchhoff\;condition})
        \end{align}
where we used (\ref{symmetry}) in the second equality. 

\begin{lemma}\label{lem:PT}
If $T$ is reversible, then $P$ is also reversible, and if the reversible measure of $T$ is $\ell^2(V)$, then $\ker(1-T)=\mathbb{C}\{m_V\}$. 
\end{lemma}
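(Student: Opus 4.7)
The plan is to establish both statements by exploiting the identity $T=K^{*}SK$ and the $T$-reversibility relation $\alpha(\bar a)m_V(o(a))=\alpha(a)m_V(t(a))$.

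For the first claim, the strategy is a one-line calculation. Taking absolute values squared of the reversibility identity and using $p(a)=|\alpha(\bar a)|^{2}$, $p(\bar a)=|\alpha(a)|^{2}$ yields
\[ |m_V(o(a))|^{2}\,p(a)=|m_V(t(a))|^{2}\,p(\bar a)\quad(\forall a\in A),\]
so $\mu:=|m_V|^{2}$ is a reversible measure for the random walk $P$.

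For the second claim I would split existence and uniqueness of the $(+1)$-eigenvector. Existence is direct: using the explicit formula $(Tf)(u)=\sum_{t(a)=u}\alpha(\bar a)\overline{\alpha(a)}f(o(a))$ and substituting the reversibility identity gives $\alpha(\bar a)\overline{\alpha(a)}m_V(o(a))=|\alpha(a)|^{2}m_V(u)$ when $t(a)=u$, so
\[ (Tm_V)(u)=m_V(u)\sum_{t(a)=u}|\alpha(a)|^{2}=m_V(u),\]
by the normalization $\|\alpha_u\|=1$. Combined with the hypothesis $m_V\in\ell^{2}(V)$, this puts $m_V$ in $\ker(1-T)$.

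For uniqueness, the key observation is that $S$ is a self-adjoint unitary (hence $\|S\|=1$) and $K$ is isometric, so $T$ is self-adjoint with $\|T\|\le 1$. If $f\in\ell^{2}(V)$ satisfies $Tf=f$, then
\[ \langle Kf,SKf\rangle=\langle f,Tf\rangle=\|f\|^{2}=\|Kf\|\,\|SKf\|,\]
so Cauchy--Schwarz forces $SKf=Kf$. Written out, this is $\alpha(a)f(t(a))=\alpha(\bar a)f(o(a))$ for every arc $a$. Dividing by the reversibility identity whenever $\alpha(a)\ne 0$ yields $f(o(a))/m_V(o(a))=f(t(a))/m_V(t(a))$. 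By the connectedness of the underlying graph (restricted to the support of $\alpha$, which is the graph on which the random walk $P$ actually moves), the function $f/m_V$ is constant, so $f\in\mathbb{C}\{m_V\}$.

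The main obstacle, and the only subtle point, is the uniqueness step: the Cauchy--Schwarz argument yields pointwise equations only on arcs where $\alpha\neq 0$, so one must invoke the connectedness of the effective graph on which $P$ acts (which is implicit from the standing assumption that $G$ is connected and that the chain $P$ is defined on it). Once this is granted, propagating the constancy of $f/m_V$ along a path between any two vertices closes the proof.
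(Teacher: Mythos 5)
Your proof is correct, and for the uniqueness half it takes a genuinely different route from the paper. The paper's proof is two lines: the first claim is exactly your square-modulus computation, but for $\ker(1-T)=\mathbb{C}\{m_V\}$ it conjugates $T=D^{-1}PD$ to the transition operator $P$ and cites the Perron--Frobenius theorem. You instead exploit the structural identity $T=K^*SK$: since $K$ is an isometry and $S$ a self-adjoint unitary, $T$ is a self-adjoint contraction, and equality in Cauchy--Schwarz for $\langle Kf,SKf\rangle=\|f\|^2$ forces $SKf=Kf$, i.e.\ every $\ell^2$-eigenfunction of $T$ at $1$ satisfies the detailed-balance relation $\alpha(\bar a)f(o(a))=\alpha(a)f(t(a))$; dividing by the relation for $m_V$ and propagating along paths gives $f\in\mathbb{C}\{m_V\}$. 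This buys something: Perron--Frobenius in its textbook form is a finite-dimensional (or at least irreducibility-dependent) statement, and the conjugation by $D$ is by a possibly unbounded multiplication operator, so the paper's step is rather terse for an infinite graph, whereas your argument works directly in $\ell^2(V)$. Note also that your key step $SKf=Kf$ is exactly what the paper derives later, inside the proof of Lemma~2.3, by the cleaner observation $SKf-Kf\in\mathcal{L}\cap\mathcal{L}^\perp=\{0\}$ (no self-adjointness or Cauchy--Schwarz needed), so your route is fully compatible with the paper's machinery. The one caveat you correctly flag --- connectedness of the subgraph of arcs with $\alpha(a)\neq 0$ --- is a genuine implicit hypothesis, but it is needed equally by the paper's Perron--Frobenius argument (irreducibility of $P$), and it holds in the half-line application where $p_j,q_j>0$; flagging it is a point in your favour rather than a gap.
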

\begin{proof}
Just taking the square modulus to both sides of (\ref{eq:DBC}), we obtain the former. 
It is easy to check that $Tm_V=m_V$ holds if $T$ is reversible. 
Since $T$ is isomorphic to the transition operator $P$ if $T$ is reversible, the Perron-Frobeniou theorem implies $\ker(1-T)=\mathbb{C}\{m_V\}$
which completes the proof of the latter. 
\end{proof}
The following lemma is obtained by tracing the proofs of spectral mapping theorems of quantum walks in \cite{HPSS, MOS} 
with some modifications of the settings. 
\begin{lemma}\label{lem:recQW}
\noindent
$T$ is reversible and $m_V\in \ell^2(V)$ if and only if $U|_\mathcal{L}$ has the eigenvalue $1$. 
In this case, $\ker(1-U|_\mathcal{L})=\mathbb{C}\{Km_V\}$. 
\end{lemma}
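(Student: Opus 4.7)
I would prove the biconditional in two steps, using throughout the decomposition $C = KK^* + e^{i\sigma}(I - KK^*)$ (since $KK^*$ is the orthogonal projection onto $\mathrm{Im}(K) = \ker(I-C)$), together with $K^*K = I_V$ and $T = K^*SK$. For the implication from reversibility to the eigenvalue, suppose $T$ is reversible with $m_V \in \ell^2(V)$: Lemma \ref{lem:PT} gives $Tm_V = m_V$, and rewriting (\ref{eq:DBC}) arc-wise it is exactly the identity $SKm_V = Km_V$. Since $Km_V \in \mathrm{Im}(K) \subset \ker(I-C)$ we have $CKm_V = Km_V$, hence $U(Km_V) = SC(Km_V) = S(Km_V) = Km_V$, with $Km_V \in \mathcal{L}\setminus\{0\}$.

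For the converse, take $\psi \in \mathcal{L}\setminus\{0\}$ with $U\psi = \psi$, set $\rho := K^*\psi \in \ell^2(V)$, and write $\psi_\perp := \psi - K\rho \in \ker K^*$. The equivalent form $C\psi = S\psi$ of the eigenvalue equation, combined with the coin decomposition, yields
\[
K\rho - SK\rho = (S - e^{i\sigma})\,\psi_\perp.
\]
Applying $K^*$ and then $K^*S$ to this identity (using $K^*K = I_V$, $K^*SK = T$, $S^2 = I$, and $K^*\psi_\perp = 0$) gives the two relations $(I - T)\rho = K^*S\psi_\perp$ and $(I - T)\rho = e^{i\sigma}\,K^*S\psi_\perp$. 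Since $\sigma \in (0,2\pi)$ so $e^{i\sigma} \neq 1$, comparing these forces both $K^*S\psi_\perp = 0$ and $T\rho = \rho$. Consequently $\psi_\perp \in \ker K^* \cap \ker K^*S = \mathcal{L}^\perp$; but $\psi_\perp \in \mathcal{L}$, so $\psi_\perp = 0$ and $\psi = K\rho$ with $\rho \neq 0$.

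The final step is the short $\ell^2$-norm identity
\[
\|K\rho - SK\rho\|^2 = 2\|\rho\|^2 - 2\,\mathrm{Re}\langle\rho, T\rho\rangle,
\]
whose right side vanishes because $T\rho = \rho$. Hence $SK\rho = K\rho$, which is precisely the reversibility condition (\ref{eq:DBC}) with $m_V := \rho$; connectedness of $G$ and non-vanishing of $\alpha$ along every arc imply $\rho$ is nowhere-vanishing, and $\rho = K^*\psi \in \ell^2(V)$ is automatic. Uniqueness $\ker(1 - U|_\mathcal{L}) = \mathbb{C}\{Km_V\}$ then follows from Lemma \ref{lem:PT}, which gives $\ker(I - T) = \mathbb{C}\{m_V\}$: so $\rho = c\,m_V$ and $\psi = c\,Km_V$.

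\textbf{Main obstacle.} The technical crux is the simultaneous extraction of $T\rho = \rho$ and $\psi_\perp = 0$ from the single identity $K\rho - SK\rho = (S - e^{i\sigma})\psi_\perp$; projecting by $K^*$ alone captures only half of the information, and it is the second projection by $K^*S$ combined with the non-degeneracy $e^{i\sigma}\neq 1$ that closes the argument. Once $T\rho = \rho$ is in hand, the cancellation in the $\ell^2$-norm identity converts the spectral condition into arc-wise reversibility essentially for free.
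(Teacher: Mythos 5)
Your proof is correct, but your converse takes a genuinely different and more self-contained route than the paper. The paper passes through the spectral-mapping machinery: it writes $\psi=L\varphi$ with $L[f\;g]^\top=Kf+SKg$, invokes (3.18) of \cite{MOS} for the conditions $L(1-\tilde{T}_\sigma)\varphi=0$, $\varphi\notin\ker L$, characterizes $\ker L$ through $\ker(1\mp T)$, performs a Gaussian elimination on the $2\times 2$ block system, and then disposes of the $\ker(1+T)$ component via a bipartiteness discussion before concluding $\psi\in\mathbb{C}\{Km_V\}$. You instead decompose the eigenvector directly as $\psi=K\rho+\psi_\perp$ with $\rho=K^*\psi$, rewrite $U\psi=\psi$ as $C\psi=S\psi$, and project the resulting identity $K\rho-SK\rho=(S-e^{i\sigma})\psi_\perp$ by $K^*$ and $K^*S$; the non-degeneracy $e^{i\sigma}\neq 1$ then kills $\psi_\perp$ (since $\ker K^*\cap\ker K^*S=\mathcal{L}^\perp$ and $\psi_\perp\in\mathcal{L}$) and yields $T\rho=\rho$ simultaneously, after which $SK\rho=K\rho$ follows either from your norm identity or, more directly, from $\psi_\perp=0$ in the same displayed identity (so the norm computation is actually redundant). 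Your approach buys elementarity and avoids both the external citation and the $\ker(1+T)$/bipartite case analysis; the paper's approach buys compatibility with the general $\tilde{T}_\sigma$ formalism it reuses elsewhere. One small caveat: to conclude that $\rho$ is a reversible measure in the paper's sense (nowhere vanishing) you invoke non-vanishing of $\alpha$ on every arc, which is not an explicit hypothesis of the paper; note, however, that the paper's own step ``$SKf=Kf$ implies $f$ must be a reversible measure'' glosses the same point, so you are no less rigorous and in fact make the needed assumption visible.
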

\begin{proof}
Assume that $T$ is reversible and $m_V\in \ell^2(V)$. 
Remark that the coin operator $C$ is expressed by $C=KK^*+e^{i\sigma}(1-KK^*)$ and $K^*K=1_V$. Then $UKm_V=SKm_V$ holds. 
Since $m_V$ is a reversible measure, it holds that 
\begin{equation}\label{eq:ibuki} (SKm_V)(a)=(Km_V)(\bar{a})=\alpha(\bar{a})m_V(o(a))=\alpha(a)m_V(t(a))=(Km_V)(a) \end{equation}
for any $a\in A$. 
Then we have $UKm_V=SKm_V=Km_V$. Since $K$ is an isometric operator and $m_V\in \ell^2(V)$, we have $Km_V\in \ell^2(A)$, 
which implies that $U|_\mathcal{L}$ has the eigenvalue $1$. 

On the other hand, let us assume that $U|_\mathcal{L}$ has an eigenvalue $1$ and 
take arbitrarily $\psi\in \ker(1-U|_\mathcal{L})$. 
We will show that $T$ is reversible, $m_V\in\ell^2(V)$ and $\psi\in \mathbb{C}\{Km_V\}$ as follows. 
Since $\psi\in \mathcal{L}$, there exists $\varphi:=[\phi_1,\phi_2]^\top\in \ell^2(V)\oplus \ell^2(V)$ such that $\psi=K\phi_1+SK\phi_2$, that is, 
$\psi=L\varphi$. 
By (3.18) in \cite{MOS}, $UL\varphi=L\varphi$ includes 
	\begin{equation}\label{eq:condition}
         L(1-\tilde{T}_\sigma)\varphi = 0,\;\varphi \notin \ker(L) 
        \end{equation}
and 
	\begin{equation}\label{eq:condition2} 
        \ker L=\ker(1-\tilde{T}_\pi^2)=\ker \begin{bmatrix} 1 & T \\ T & 1 \end{bmatrix}
        =\left\{ \begin{bmatrix} f+g \\ -f+g \end{bmatrix} \;\bigg|\; f\in\ker(1-T),\;g\in\ker(1+T)\right\}. 
        \end{equation}
Thus the conditions (\ref{eq:condition}) and (\ref{eq:condition2}) imply  
	 \[ \begin{bmatrix} 1 & T \\ T & 1 \end{bmatrix} (1-\tilde{T}_\sigma)\varphi=0,\;\varphi\notin \ker \begin{bmatrix} 1 & T\\ T & 1\end{bmatrix}, \]
which can be equivalently transformed by the Gaussian elimination as follows: 
         \begin{equation}\label{eq:ibuki2} 
         \begin{bmatrix} 1-T & -(1-T) \\ 0 & 1-T^2 \end{bmatrix}\varphi =0\;\;\mathrm{and}\;\;\begin{bmatrix} 1 & T \\ T & 1 \end{bmatrix}\varphi\neq 0.   
         \end{equation}
In the following, let us see this condition is more concretely described by 
	\begin{equation}\label{eq:inverse} 
        \varphi\in \left\{ \begin{bmatrix} f_1+g \\ f_2+g \end{bmatrix} \;\bigg|\; f_1\neq -f_2,\;f_1,f_2\in \ker(1-T),\;g\in\ker(1+T) \right\}. 
        \end{equation}
Let $\varphi=[\phi_1\;\phi_2]^\top$. 
The first condition in (\ref{eq:ibuki2}) can be rewritten by  
\begin{enumerate}
\item $\phi_2=f_2+g_2$, where $f_2\in \ker(1-T)$, $g_2\in \ker(1+T)$, and 
\item $(1-T)\phi_1=(1-T)\phi_2$.  
\end{enumerate}
Since $(1-T)\phi_2=2g_2\in \ker(1+T)$ by (1), (2) implies 
there exists $f_1\in \ker(1-T)$ such that $\phi_1=f_1+g_2$. 
Inserting $\phi_1=f_1+g_2$ and $\phi_2=f_2+g_2$ into the second condition in (\ref{eq:ibuki2}), 
we have $f_1+f_2\neq 0$. 
Then we obtain (\ref{eq:inverse}).

Let us consider about $\ker(1-T)$ in the following. 
Assume $f\in \ker(1-T)$. 
Since $T=K^*SK$, $K^*K=I_V$ and $S^2=I_A$, we have 
	\[ K^*(SKf - Kf)=0,\;\; K^*S(Kf-SK)=0. \]
This implies $SKf-Kf\in \mathcal{L}^\perp\cap \mathcal{L}$. 
Then we have $SKf-Kf=0$. This is equivalent to (\ref{eq:ibuki}), 
which implies $f$ must be a reversible measure. 
Therefore $T$ must be reversible if $\ker(1-T)\neq 0$ by Lemma~\ref{lem:PT}.
Note that the condition $f_1\neq -f_2$  in (\ref{eq:inverse}) implies that either $f_1$ or $f_2$ is not $\bf{0}$. 
This means that if $\ker(1-U|_\mathcal{L})\neq \{\bs{0}\}$, then $\ker(1-T)\neq \{\bs{0}\}$. 
Thus we have shown that the reversibility of $T$ and $m_V\in \ell^2(V)$. 

Now we convert $\varphi$ to $\psi\in \ell^2(A)$ by $L\varphi$ as follows. 
Since $\dim \ker(1-T)=1$, we put $f_2=cf_1$ with some constant $c\neq -1$. 
It is well known that $P$ has a reversible distribution and $G$ is bipartite if and only if $\ker(1+P)\neq 0$. 
In this case, $\dim \ker(1+P)=1$ holds. 
The correspoinding measure $m_V'\in \ker(1+T)$ on a bipartite graph, whose vertices are decomposed into $V=V_0 \sqcup V_1$ such that every edge connects a vertex in $V_0$ to one in $V_1$, 
can be described by $m_V'(u) = (-1)^j m_V(u)$ for any $u\in V_j$ ($j=0,1$). 
We can easily check that $m_V'$ satisfies $m_V'(o(a))\alpha(\bar{a})=-m_V'(t(a))\alpha(a)$. 
Thus we notice that $SKf_1=Kf_1$ because $f_1$ is a reversible measure, and $Kg+SKg=0$ because $g\in\ker(1+T)$ is proportional to $m_V'$. 
Then we have 
	\[ \psi= L\varphi=(Kf_1+SKf_2)+(Kg+SKg)=(1+c)Kf_1,  \]
which implies  $\psi\in \mathbb{C}\{Km_V\}$. 
Then we have reached to the conclusion. 
\end{proof}
\section{On the half line}
Let us consider a quantum walk on the half line with the self loop at the origin $j=0$. 
Here the inverse of the self loop $\bar{a}_*$ is regarded as $\bar{a}_*=a_*$ in this paper. 
Recall that $S$ is the flip flop shift. For the self loop $a_*$, we define $S\delta_{a_*}=\delta_{a_*}$. 

It is known that the following quantum coin is a minimal representation for quantum walk on the half line~\cite{CGMV, Oh} which means that
for any time evolution operator $U'$, there exists the minimal representation quantum walk $U$ and diagonal unitary operator $W$ such that 
$U'=WUW^{-1}$ and the necessary number of parameters for the representation is minimum. 
The minimal representation quantum walk's coin assigned at vertex $j\in\{0,1,2,\dots\}$ is 
	\[ C_j=\begin{bmatrix} -\eta_j & \rho_j \\ \rho_j & \bar{\eta}_j \end{bmatrix}, \]
where $\eta_j\in \mathbb{C}$ with $|\eta_j|\leq 1$, and $\rho_j=\sqrt{1-|\eta_j|^2}$. 
For each vertex, we define that ``$|0\rangle:={}^T[1\;\;0]"\in \mathbb{C}^2$ represents the arc from the right direction, 
and for each vertex except the origin, ``$|1\rangle:={}^T[0\;\;1]"\in \mathbb{C}^2$ represents the arc from the left direction, and 
for the origin, $|1\rangle$ represents the self loop. 
Then any quantum walks on the half line can be represented by the sequence of $\eta_j$'s, which are called the Verblunsky parameters. 

The quantum walk treated here is driven by the flip flop shift. 
In the following, we explain this walk is isomorphic to a quantum walk with the moving shift which may be familiar with some researchers on quantum walks. 
Let $(j;R)$ denote the arc whose terminus is $j$ and origin is $j+1$ while $(j+1;L)$ denote the arc whose terminus is $j+1$ and origin is $j$ for every $j\geq 0$. 
Let $(0;L)$ denote the self loop at the origin. 
Consider the moving shift operator $S_m:\ell^2(A)\to \ell^2(A)$ such that 
	\[ (S_m\psi)(x;R)=\psi(x-1;R)\; (x\geq 0);\;\; (S_m\psi)(x;L)=\psi(x+1;L)\; (x\geq 1);   \]
and $(S_m\psi)(0;L)=\psi(0;L)$. 
On the other hand, remark that 
	\[ (S\psi)(x;R)=\psi(x+1;L)\; (x\geq 0);\;\; (S\psi)(x;L)=\psi(x-1;R)\; (x\geq 1); \]
and $(S\psi)(0;L)=\psi(0;L)$. 
Then we have $U=S_m C'$, where 
	\[ C'=S_m^{-1}SC=\oplus \sum_{x\in \mathbb{Z}_+} \begin{bmatrix} \rho_j & \bar{\eta}_j \\ -\eta_j & \rho_j \end{bmatrix}. \]

Now let us proceed to considering what is the underlying of this quantum walk. 
Since the eigenequation for $C_j$ is described by 
	\[ (i\lambda)^2-2\mathrm{Im} (\eta_j) (i\lambda)+1=0, \]
a necessary and sufficient condition for Assumption~1' is that there exists $\kappa\in [-1,1]\subset \mathbb{R}$ which is independent of 
position $j$ such that $\mathrm{Im}(\eta_j)=\kappa$. 
The spectrum of $C_j$ is 
	\[ \mathrm{Spec}(C_j)=\{-i e^{\pm i \phi}\},\] 
where $\cos\phi=\kappa$, and 
$\ker(C_j+ie^{i\phi})=\mathbb{C}[\sqrt{p_j}\;\; \sqrt{q_j}]^\top$. 
Here 
	\begin{align*} 
        p_j = \frac{1}{2}\left(1-\frac{\mathrm{Re}(\eta_j)}{\sqrt{1-\kappa^2}} \right); \;\;
        q_j = \frac{1}{2}\left(1+\frac{\mathrm{Re}(\eta_j)}{\sqrt{1-\kappa^2}} \right).
        \end{align*}
and $\mathrm{Re}(z)$ is the real part of $z\in \mathbb{C}$. 
In Appendix, we show the computational way of this eigenvector. 
Note that this eigenvector corresponds to $\alpha_j$ in the previous section.

From now on, we will consider the following underlying random walk $P=DTD^{-1}$; 
the probabilities associated with the right and left moving at position $j\geq 1$ 
are $p_j$ and $q_j$, respectively, and for $j=0$, $q_0$ is the probability staying at the same position. 
Since the graph is a tree, the reversible measure always exists. Indeed, the reversible measure of $T$ is expressed by
	\begin{equation}\label{eq:revmeasure} 
        \frac{m_V(j)}{m_V(0)}=\sqrt{\frac{p_0\cdots p_{j-1}}{q_1\cdots q_j}}
        \end{equation}
using (\ref{eq:DBC}) recursively. 
\subsection{Review on recurrence properties of random walk}
In this subsection, let us give a short review on recurrence properties of random walks. 
\begin{definition}\label{def:RW}(\cite{Dur,Schi})
Let $T_j$ be a return time to position $j$ of the random walk. 
\begin{enumerate}
\item If $P(T_j<\infty)=1$ and $E(T_j)<\infty$, then we say the random walk is positive recurrent. 
The stationary  state is proportion to $1/E(T_j)$. 
\item If $P(T_j<\infty)=1$ and $E(T_j)=\infty$, then we say the random walk is null recurrent. 
\item If $P(T_j<\infty)<1$, then we say the random walk is transient. 
\end{enumerate}
\end{definition}
Useful necessary and sufficient conditions for recurrence properties of random walks on the half line case are well known as follows: 
\begin{theorem}\label{thm:RW}(\cite{Schi})
Let 
	\[ c_R:=\sum_{j\geq 0} \frac{p_0\cdots p_{j-1}}{q_1\cdots q_j},\;\;c_T:=\sum_{j\geq 1} \frac{q_0\cdots q_{j}}{p_0\cdots p_j}. \]
Then we have 
\begin{enumerate}
\item the random walk is positive recurrent iff $m_V\in \ell^2(\mathbb{Z}_+)$, that is, $c_R<\infty$ (which implies $c_T=\infty$).  
Moreover the stationary distribution at position $j$ is $m_V^2(j)/(1+c_R)$ with $m_V(0)=1$; 
\item the random walk is null recurrent iff $c_R=\infty$ and $c_T=\infty$;  
\item the random walk is transient iff $c_T<\infty$ (which implies $c_R=\infty$). 
\end{enumerate}
\end{theorem}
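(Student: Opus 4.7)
The plan is to run the standard birth-and-death chain analysis, separating the recurrence/transience dichotomy (item (3)) from the positive/null subdivision (items (1) and (2)). For (3), I would analyze the hitting probability $h(j) := P_j(T_0 < \infty)$ as a bounded nonnegative harmonic function, with $h(0) = 1$ and $p_j h(j+1) + q_j h(j-1) = h(j)$ for $j \geq 1$. Rewriting as $p_j(h(j+1) - h(j)) = q_j(h(j) - h(j-1))$ and iterating yields
\[
h(j+1) - h(j) = (h(1) - 1) \prod_{i=1}^{j} \frac{q_i}{p_i},
\]
so $h(N+1) = 1 + (h(1) - 1) \sum_{j=0}^{N} \prod_{i=1}^{j} q_i/p_i$. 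The minimal nonnegative solution is identically $1$ iff $\sum_{j \geq 0} \prod_{i=1}^{j} q_i/p_i$ diverges, and up to a boundary factor $q_0/p_0$ this coincides with $c_T$; the chain is therefore transient iff $c_T < \infty$. The parenthetical implication $c_R = \infty$ in that case follows from the identity $m_V^2(j) \cdot (q_0 \cdots q_j)/(p_0 \cdots p_j) = q_0/p_j$: if $c_T < \infty$ then the second factor tends to $0$, forcing $m_V^2(j) \to \infty$ and hence $c_R = \infty$.

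For (1) and (2), assume the recurrent regime $c_T = \infty$. Detailed balance $p_j \pi(j) = q_{j+1} \pi(j+1)$ determines the invariant measure recursively, giving $\pi(j)/\pi(0) = m_V^2(j)$ in the notation of \eqref{eq:revmeasure}, so the total mass is proportional to $c_R$. Appealing to the classical equivalence that an irreducible recurrent chain is positive recurrent iff it admits a finite invariant distribution, $c_R < \infty$ is equivalent to positive recurrence, and normalizing (with $m_V(0) = 1$) yields the stationary distribution $m_V^2(j)/(1 + c_R)$ as asserted in (1), with $E(T_j) \propto 1/\pi(j)$ matching Definition~\ref{def:RW}. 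The residual case $c_R = c_T = \infty$ is recurrent but admits no finite invariant distribution, hence null recurrent, which is (2).

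The main subtlety is the boundary behavior at the origin: the self loop of probability $q_0$ must be reconciled with detailed balance, which works cleanly since the total outflow from $0$ is $p_0 = 1 - q_0$, so the boundary equation $(1 - q_0)\pi(0) = q_1 \pi(1)$ is consistent with the interior formula $p_j \pi(j) = q_{j+1} \pi(j+1)$. One must also quote two textbook ingredients (e.g., from \cite{Dur, Schi}): the optional-stopping / minimal-solution identification of $h$ with the true hitting probability, and the positive-recurrence $\Leftrightarrow$ finite-invariant-distribution equivalence for irreducible chains. With these in hand, the three cases partition the parameter space exactly as stated.
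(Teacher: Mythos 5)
The paper gives no proof of Theorem 3.1 at all --- it is quoted from Schinazi --- so there is nothing internal to compare against; your argument is the standard textbook derivation (minimal nonnegative solution analysis of the hitting probability $h(j)=P_j(T_0<\infty)$ for the transience criterion, plus the detailed-balance invariant measure and the classical equivalence of positive recurrence with a finite invariant distribution), and it is sound. Two small caveats, both implicit in the paper's statement rather than faults introduced by you: the inclusion of the factor $q_0/p_0$ in $c_T$ and your step ``$c_T<\infty$ forces $m_V^2(j)\to\infty$'' via $m_V^2(j)\cdot\frac{q_0\cdots q_j}{p_0\cdots p_j}=q_0/p_j$ require $q_0>0$ (and irreducibility, i.e.\ $p_j,q_j>0$); and since the $j=0$ term of $c_R$ already equals $1$, the normalizing constant of the stationary distribution should be $c_R$ rather than $1+c_R$ unless $c_R$ is read as the sum over $j\geq 1$, a bookkeeping discrepancy in the stated theorem that your write-up silently reproduces.
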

\subsection{Eigenvalue of QW and positive recurrent random walk}
The statement of Theorem~{\ref{thm:RW}} (1) and (\ref{eq:revmeasure}) imply that 
$T$ has an $\ell^2$-reversible measure if and only if the induced transition matrix $P$ is positive recurrent.
Thus combining it with Lemma~\ref{lem:recQW}, we directly obtain the following proposition.  
Remark that the eigenvalues are rotated by $-e^{i\phi}$ in the setting of Lemma~\ref{lem:recQW}; see Remark~\ref{rem:rotate}. 
\begin{proposition}\label{prop:1}
If $P$ is positive recurrent, then the $(-ie^{i\phi})$-eigenspace of $U$ exists such that $\ker(-ie^{i\phi}-U)=\mathbb{C}\{Km_V\}=\mathbb{C}\{\psi_*\}$. 
Here $\psi_*$ is $\psi_*(0;L)=1$, 
	\[ \psi_*(j;R)=\psi_*(j+1;L)=\sqrt{\frac{p_0\cdots p_j}{q_0\cdots q_j}}\;\mathrm{for}\;j\geq 0. \]
\end{proposition}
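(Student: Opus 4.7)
The strategy is to obtain this proposition as a direct corollary of Lemma~\ref{lem:recQW}, together with Theorem~\ref{thm:RW}(1) and the explicit reversible measure (\ref{eq:revmeasure}), after converting the half-line parametrization into the Assumption~1 framework via the rotation in Remark~\ref{rem:rotate}(2).

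First I would rephrase the walk so that the target eigenvalue becomes $+1$. Since $\mathrm{Spec}(C_j)=\{-ie^{i\phi},-ie^{-i\phi}\}$ with $[\sqrt{p_j}\;\sqrt{q_j}]^\top$ spanning the $(-ie^{i\phi})$-eigenspace, setting $U':=ie^{-i\phi}U$ produces an Assumption~1 walk (with $\sigma=-2\phi$) whose $\alpha_j$ is exactly $[\sqrt{p_j}\;\sqrt{q_j}]^\top$. Consequently $\ker(-ie^{i\phi}-U)=\ker(1-U')$ and it suffices to compute the latter.

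Next I would apply Lemma~\ref{lem:recQW} to $U'$. The operator $T$ is automatically reversible on the half line, with reversible measure given by (\ref{eq:revmeasure}), and Theorem~\ref{thm:RW}(1) identifies positive recurrence of $P$ with $c_R<\infty$, i.e.\ $m_V\in\ell^2(\mathbb{Z}_+)$. Lemma~\ref{lem:recQW} then yields $\ker(1-U'|_\mathcal{L})=\mathbb{C}\{Km_V\}$. To promote this to $\ker(1-U')$, I would note that $U'|_{\mathcal{L}^\perp}$ acts as $e^{-2i\phi}\oplus(-e^{-2i\phi})$ by Lemma~2.1(2); generically neither factor equals $1$, and in the special values $\phi\in\{0,\pi/2,\pi\}$ a short recursion against the Kirchhoff-type condition $\psi\in\ker K^*$ on the half line shows that any $(+1)$-eigenfunction on $\mathcal{L}^\perp$ either is forced to vanish from $\psi(a_*)=0$ or has amplitudes proportional to $\sqrt{(q_0\cdots q_j)/(p_0\cdots p_j)}$, whose $\ell^2$-summability requires $c_T<\infty$ and is therefore incompatible with positive recurrence.

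Finally I would unfold $Km_V$ from the formulas $\alpha(j;L)=\sqrt{q_j}$, $\alpha(j;R)=\sqrt{p_j}$, $\alpha(a_*)=\sqrt{q_0}$. Choosing the normalization $m_V(0)=1/\sqrt{q_0}$ makes $(Km_V)(0;L)=1$, and then substituting (\ref{eq:revmeasure}) into $(Km_V)(j;R)=\sqrt{p_j}\,m_V(j)$ and $(Km_V)(j+1;L)=\sqrt{q_{j+1}}\,m_V(j+1)$ reproduces $\sqrt{(p_0\cdots p_j)/(q_0\cdots q_j)}$ in both cases, matching the stated $\psi_*$. The only non-mechanical piece of this argument is the bookkeeping of the spectral rotation between Section~2's Assumption~1 framework and the half-line parametrization; the rest is either a direct invocation of previously established results or a routine computation.
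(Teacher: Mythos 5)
Your proposal is correct and follows essentially the same route as the paper, which obtains the proposition directly by combining Theorem~\ref{thm:RW}(1) and (\ref{eq:revmeasure}) with Lemma~\ref{lem:recQW} and the spectral rotation of Remark~\ref{rem:rotate}(2), then unfolding $Km_V$ into $\psi_*$. Your additional step ruling out $(-ie^{i\phi})$-eigenvectors coming from $\mathcal{L}^\perp$ (via the Kirchhoff recursion and the incompatibility of $c_T<\infty$ with positive recurrence) is a point the paper leaves implicit, handled there only through the discussion surrounding Proposition~\ref{prop:2}, so it is a welcome bit of extra care rather than a deviation.
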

Therefore if the initial state has an overlap to this eigenspace and there are no singular continuous spectrum~\cite{SuzSeg}, then the localization happens. 
\subsection{Finite energy flow of QW and transient random walk}
We found that if the underlying random walk is positive recurrent, then $\sigma_p(U)\neq \emptyset$. 
Now how about the opposite case; that is, the underlying random walk is transient ? 
To answer this question, we consider the eigenspace $\mathcal{H}_\pm=\ker(\pm 1-S) \cap \ker K^*$. 
Letting $\psi\in \mathcal{L}^\perp$, from the Kirchhoff condition, we have 
	\begin{equation}\label{eq:patapata} 
        \sqrt{q_j}\psi(j;L)+\sqrt{p_j}\psi(j;R)=0.\;\;(j\geq 0) 
        \end{equation}

First we assume $\psi\in \mathcal{H}_-$. Then $\psi(\bar{a})=-\psi(\bar{a})$ must be hold. 
On the self loop $(0;L)$, since $\overline{(0;L)}=(0;L)$ and the skew symmetry condition, $\psi(0;L)=0$ holds. 
By using (\ref{eq:patapata}) recursively, we have $\psi(a)=0$ for any $a\in A$. 
Thus $\mathcal{H}_-=\boldsymbol{0}$. 

On the other hand, let us consider $\psi\in \mathcal{H}_+$ case. 
By the symmetricity we can put $\psi(0;L):=1$. 
Then (\ref{eq:patapata}) implies $\psi(0;R)=-\sqrt{q_0/p_0}$, and using the symmetricity again we have $\psi(1;L)=-\sqrt{q_0/p_0}$. 
In the same way, (\ref{eq:patapata}) and the symmetricity imply $\psi(1;R)=\psi(2;L)=\sqrt{q_0q_1/p_0p_1}$. 
Taking the same procedure recursively, we obtain
	\begin{align}
        \psi(0;L)=1,\;\; \psi(j;R)=\psi(j+1;L)=(-1)^{j+1}\sqrt{\frac{q_0\cdots q_{j}}{p_0\cdots p_{j}}}\;\; (j\geq 0)
        \end{align}
Therefore $\psi$ is a flow on the half line satisfying the Kirchhoff condition and the symmetricity. 
From the above expression of $\psi$, 
we can notice that the condition of $||\psi||_{\ell^2(A)}<\infty$ is nothing but $c_T<\infty$. 
Then we summarize the statement below. 
\begin{proposition}\label{prop:2}
The underlying birth death process is transient if and only if $\mathcal{H}_-\neq \boldsymbol{0}$. 
The eigenspace is described by $\mathcal{H}_-=\ker(ie^{-i\phi}-U)=\mathbb{C}\xi_*$. 
Here $\xi_*(0;L)=1$, and 
	\[ \xi_*(j;R)=\xi_*(j+1;L)=(-1)^{j+1}\sqrt{\frac{q_0\cdots q_{j}}{p_0\cdots p_{j}}}\;\;(j\geq 0). \]
\end{proposition}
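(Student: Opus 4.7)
The plan is to explicitly solve the linear system defining $\mathcal{H}_\pm$ on the half line, extract the candidate eigenvector in closed form, and then convert its $\ell^2$-summability into the transience criterion already recorded in Theorem~\ref{thm:RW}. Concretely, using the half-line coin from the previous subsection, $\alpha_j = [\sqrt{p_j},\sqrt{q_j}]^\top$, so the constraint $K^*\psi = 0$ reads
\[ \sqrt{p_j}\,\psi(j;R) + \sqrt{q_j}\,\psi(j;L) = 0 \qquad (j\geq 0), \]
and together with either symmetry condition $\psi(a)\pm\psi(\bar a) = 0$ this fully determines $\psi$ from $\psi(0;L)$ by a two-term recursion that alternates Kirchhoff and (skew-)symmetry.

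I would then split into the two sign cases and dispose of one as trivial. Because $\overline{(0;L)} = (0;L)$, the skew-symmetric relation $\psi(0;L) = -\psi(0;L)$ forces $\psi(0;L) = 0$; the Kirchhoff relation at $j=0$ pushes this zero to $\psi(0;R)$, skew-symmetry identifies it with $-\psi(1;L)$, and an induction on $j$ propagates zero to every arc, killing that eigenspace. In the symmetric case, the self-loop constraint is vacuous, so I normalize $\psi(0;L) = 1$ and repeatedly alternate: Kirchhoff gives $\psi(j;R) = -\sqrt{q_j/p_j}\,\psi(j;L)$, and symmetry gives $\psi(j+1;L) = \psi(j;R)$ because $\overline{(j+1;L)} = (j;R)$. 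Unrolling the recursion produces
\[ \psi(j;R) = \psi(j+1;L) = (-1)^{j+1}\sqrt{\frac{q_0\cdots q_j}{p_0\cdots p_j}}, \]
the alternating sign coming from one minus sign per Kirchhoff step, and yields a one-dimensional solution space $\mathbb{C}\xi_*$.

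Finally I would check when $\xi_*$ actually lies in $\ell^2(A)$. Each absolute value above appears on two arcs $(j;R)$ and $(j+1;L)$ that are inverse to each other, plus the self-loop contributes $1$, so
\[ \|\xi_*\|^2_{\ell^2(A)} = 1 + 2\sum_{j\geq 0}\frac{q_0\cdots q_j}{p_0\cdots p_j} = 1 + 2c_T, \]
which is finite if and only if $c_T < \infty$. By Theorem~\ref{thm:RW}(3), this is exactly transience of $P$, so $\mathcal{H}_-\neq\{\boldsymbol{0}\}$ iff the birth-death process is transient, and in that case $\mathcal{H}_- = \mathbb{C}\xi_*$. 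The eigenvalue label $ie^{-i\phi}$ is then read off from the decomposition $U|_{\mathcal{L}^\perp} = e^{i\sigma}\oplus(-e^{i\sigma})$ in the lemma of Section~\ref{Sec:2}, combined with the overall phase rotation relating the Assumption~1 model to the Assumption~1' half-line model (Remark~\ref{rem:rotate}) that changes the coin spectrum to $\{-ie^{\pm i\phi}\}$. The main delicacy, rather than the recursion itself, is the sign and eigenvalue bookkeeping: showing which of the two symmetry types collapses at the self-loop, and tracking the coin rotation so that the surviving eigenvalue comes out as $ie^{-i\phi}$ rather than one of the other three sign-and-phase variants.
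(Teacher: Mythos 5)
Your argument is correct and is essentially the paper's own proof: the same Kirchhoff relation $\sqrt{p_j}\psi(j;R)+\sqrt{q_j}\psi(j;L)=0$, the same collapse of the skew-symmetric sector at the self loop, the same alternating recursion producing $\xi_*$, and the same identification of $\ell^2$-summability with $c_T<\infty$ via Theorem~\ref{thm:RW}(3). The only (harmless) difference is that you make the norm computation explicit, whereas the paper simply observes that square-summability of the flow is equivalent to $c_T<\infty$.
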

Therefore if the initial state has an overlap to this eigenspace and there are no singular continuous spectrum~\cite{SuzSeg}, 
then the localization happens. 
After all, the appearance of localization of quantum walks is ensured if the underlying random walk is either positive recurrent or transient. 
\section{Application to the simulation of the dynamics on topological insulator}

\subsection{AE Model}
We consider the infinite graph cutting $2$-dimensional lattice at the $x=0$ line and add the self loops to every vertex located in the $x=0$ edge. 
Then the vertex set of this graph is $\mathbb{Z}_+\times \mathbb{Z}$. 
We denote $\partial V=\{(0,y) \;|\; y\in \mathbb{Z}\}\in V$. 
The total Hilbert space is denoted by $\ell^2(A)\cong \ell^2(V;\mathbb{C}^2)$. 
A quantum walker moves vertical and horizontal directions of $G$ alternatively. 
The quantum coins for vertical and horizontal directions are parametrized by $\alpha, \beta\in \mathbb{R}$, respectively. 
The time evolution starting from the self loop at the origin 
is described by the iteration of the unitary map $\Gamma: \ell^2(V;\mathbb{C}^2)\to \ell^2(V;\mathbb{C}^2)$~\cite{AE}:
        \begin{multline}
        (\Gamma \varphi)(x,y)
        	= Q_\alpha Q_\beta\varphi(x-1,y-1)+Q_\alpha P_\beta\varphi(x-1,y+1) \\ +P_\alpha Q_\beta\varphi(x+1,y-1)+P_\alpha P_\beta\varphi(x+1,y+1),\;\;((x,y)\notin \partial V),
\label{eq:Gamma_D}
        \end{multline}
and
        \begin{multline}
        (\Gamma\varphi)(x,y)
        	= S_\alpha Q_\beta\varphi(x,y-1)+S_\alpha P_\beta\varphi(x,y+1) \\ +P_\alpha Q_\beta\varphi(x+1,y-1)+P_\alpha P_\beta\varphi(x+1,y+1),\;\;((x,y)\in \partial V).
\label{eq:Gamma_D2}
        \end{multline}
Here for two-dimensional $\gamma$-rotation matrix 
	\[ H_\gamma=\begin{bmatrix}  \cos \gamma & -\sin \gamma \\ \sin\gamma & \cos \gamma \end{bmatrix} \;\;(\gamma\in[0,2\pi)), \]
we define 
	\[ P_\gamma=|0\rangle\langle 0|H_\gamma,\; Q_\gamma=|1\rangle\langle 1|H_\gamma,\; S_\gamma=|1\rangle\langle 0|H_\gamma. \]

\subsection{Review on the edge state computed by the spectral analysis on CMV matrix}
By the translation invariant of this model with respect to the vertical direction, we take the Fourier transform with respect to $y$; 
	\[ \hat{\varphi}(x;k):=(\mathcal{F}\varphi)(x;k)=\sum_{y\in \mathbb{Z}} \varphi(x,y)e^{iky}. \]
The dynamics is the collection of some quantum walks on the half line with respect to the wave number $k\in[0,2\pi)$ discussed in the previous section. 
Moreover the unitary map on this infinite quotient graph; the half line, is isomorphic to the CMV matrix as follows.  
\begin{proposition}(\cite{EKOS})
Let $\mathcal{C}_k$ be the CMV matrix whose Verblunsky parameters are $(\eta(k),0,\eta(k),0,\dots)$, where $\eta(k)=-\sin(\alpha+\beta)\cos k+i\sin(\alpha-\beta)\sin k$. 
\[ (\Gamma^n\varphi)(x,y)=\int_{0}^{2\pi} \left( \Lambda_k^{-1}(\mathcal{C}_k^\top)^n\Lambda_k \hat{\varphi} \right)(x)e^{-iky}\frac{dk}{2\pi},  \]
where $\Lambda_k^{-1}: \ell^2(\mathbb{Z}_+)\to \ell^2(\mathbb{Z}_+;\mathbb{C}^2)$ is defined by 
\[ (\Lambda_k^{-1}f)(j)=\begin{bmatrix} e^{-i\omega(2j+1)}f(2j+1) \\ e^{-i\omega(2j)}f(2j) \end{bmatrix}. \]
Here \[ \omega(2j)=-j\;\mathrm{arg}(\langle 0| \hat{H}_k |0 \rangle),\;\;\omega(2j+1)=(j+1)\;\mathrm{arg}(\langle 1| \hat{H}_k |1 \rangle), \]
with 
\[ \hat{H}_k=\begin{bmatrix} 
        e^{-ik}\cos\alpha\cos\beta-e^{ik}\sin\alpha\sin\beta & -e^{-ik}\cos\alpha\sin\beta-e^{ik}\sin\alpha\cos\beta \\ 
        e^{-ik}\sin\alpha\cos\beta+e^{ik}\cos\alpha\sin\beta & -e^{-ik}\sin\alpha\sin\beta+e^{ik}\cos\alpha\cos\beta
        \end{bmatrix}.  \] 
\end{proposition}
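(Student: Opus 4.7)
The plan is to reduce the stated identity to the single operator identity $\Lambda_k\,\hat\Gamma_k\,\Lambda_k^{-1}=\mathcal{C}_k^\top$, where $\hat\Gamma_k:=\mathcal{F}\Gamma\mathcal{F}^{-1}$ is the Fourier transform of $\Gamma$ in the $y$-coordinate. Once this conjugation is proved, iterating $n$ times and inverting the Fourier transform yields the stated integral representation at once.

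First, I would apply $\mathcal{F}$ to the bulk recursion (\ref{eq:Gamma_D}) and the boundary recursion (\ref{eq:Gamma_D2}). A vertical shift $\varphi(x,y\pm1)\mapsto\varphi(x,y)$ becomes multiplication by $e^{\mp ik}$, so the combinations $Q_\beta e^{\mp ik}+P_\beta e^{\pm ik}$ collapse into a single $k$-dependent $2\times2$ block. A direct calculation shows that the bulk effective coin is exactly $\hat H_k=H_\alpha D_k H_\beta$ with $D_k=\mathrm{diag}(e^{-ik},e^{ik})$, so after the Fourier transform $\hat\Gamma_k$ is a coined quantum walk on the half line in the sense of Section~3, with local coin $\hat H_k$ at every bulk vertex and a modified coin at the origin inherited from $S_\alpha$.

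Next, I would bring this half-line walk into the minimal (CMV) representation by the diagonal gauge $\Lambda_k$. The exponents $\omega(2j),\omega(2j+1)$ are telescoping partial sums of $\arg\langle\ell|\hat H_k|\ell\rangle$ for $\ell=0,1$, chosen precisely so that conjugation strips the position-dependent diagonal phases accumulated by successive applications of $\hat H_k$ along the chain, leaving only a real off-diagonal coupling. Reading off the off-diagonal entry of the gauged coin, a short trigonometric identity produces $\eta(k)=-\sin(\alpha+\beta)\cos k+i\sin(\alpha-\beta)\sin k$. The alternation $(\eta(k),0,\eta(k),0,\dots)$ then expresses the standard CGMV fact that a single effective coin per vertex is encoded by one nontrivial Verblunsky parameter followed by a trivial one, the latter corresponding to the pure-shift ``gluing'' between consecutive vertices; the transposition in $\mathcal{C}_k^\top$ is forced by the convention difference between the flip-flop shift used in Section~3 and the standard CMV ordering.

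The main obstacle is the simultaneous gauging and the boundary matching: verifying that one global choice of $\omega$ trivialises the diagonal at every site (not only locally), and that the origin vertex, where $S_\alpha$ replaces $Q_\alpha$ in (\ref{eq:Gamma_D2}), fits into the CMV block structure so that the $0$-th Verblunsky parameter is again $\eta(k)$ rather than something altered by the self-loop. Once this bookkeeping is settled, the rest of the proof is a direct application of the CGMV correspondence~\cite{CGMV} together with the elementary trigonometric identity that produces the explicit form of $\eta(k)$.
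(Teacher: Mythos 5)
The paper itself contains no proof of this proposition: it is imported verbatim from \cite{EKOS}, and the derivation there is exactly the route you outline (Fourier transform in $y$, effective coin, CGMV/minimal representation). Your reduction is sound as far as it goes: the Fourier transform does turn the alternating walk into a half-line coined walk whose bulk coin is $\hat H_k=H_\alpha\,\mathrm{diag}(e^{-ik},e^{ik})\,H_\beta$, which matches the matrix in the statement, and $\eta(k)=-\sin(\alpha+\beta)\cos k+i\sin(\alpha-\beta)\sin k$ is indeed what one reads off (up to conjugation/convention) from the off-diagonal entry after the gauge. So the strategy is the same as the cited source's.

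However, as written the proposal is an outline rather than a proof, because the three items that \emph{are} the content of the proposition are deferred: (i) that the single global gauge with the stated telescoping phases $\omega(2j)=-j\,\mathrm{arg}\langle 0|\hat H_k|0\rangle$, $\omega(2j+1)=(j+1)\,\mathrm{arg}\langle 1|\hat H_k|1\rangle$ simultaneously puts \emph{every} local block into the real form $\begin{bmatrix}-\eta&\rho\\ \rho&\bar\eta\end{bmatrix}$, and not merely each block separately with incompatible phases; (ii) that folding the two chiral components at vertex $j$ onto the integer sites $2j,2j+1$ and comparing with the $\mathcal{L}\mathcal{M}$ ($\Theta$-block) factorization of a CMV matrix produces precisely the alternating Verblunsky sequence $(\eta(k),0,\eta(k),0,\dots)$ --- the interlaced zero parameters are not generic CGMV folklore but come from matching the free transfer between consecutive vertices against the block $\Theta(0)$, and this matching also fixes whether one gets $\mathcal{C}_k$ or $\mathcal{C}_k^\top$; and (iii) that the boundary block generated by $S_\alpha$ and the self-loop coincides with the first CMV block with parameter $\eta(k)$ rather than a modified one. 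You explicitly name (i) and (iii) as ``the main obstacle'' and then stop; but settling that bookkeeping is exactly the proof, so what you have is a correct reduction plus a promissory note. To close the gap you would either carry out the block-by-block comparison with the CMV factorization (as in \cite{EKOS}, via \cite{CGMV}), or, staying closer to Section~3 of this paper, verify directly that the gauged walk is the minimal-representation walk with constant Verblunsky parameter $\eta(k)$ and then quote the known unitary equivalence between that walk and $\mathcal{C}_k^\top$ with parameters $(\eta(k),0,\eta(k),0,\dots)$.
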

The problem is essentially reduced to the spectral analysis on the CMV matrix $\mathcal{C}_k$. 
By the standard spectral analysis on the CMV matrix, we obtain the following spectral decomposition of $\sigma(\mathcal{C}_k)$. 
\begin{lemma}(\cite{EKOS})
For fixed $k\in[0,2\pi)$, the spectrum of $\mathcal{C}_k$ is decomposed into continuous spectrum $\sigma_c^{(k)}$ 
	and point spectrum $\sigma_p^{(k)}$, that is, $\sigma(\mathcal{C}_k)=\sigma_c^{(k)} \sqcup \sigma_p^{(k)}$, where
		\begin{align} 
        	\sigma_c^{(k)} &= \{ e^{i\theta} \;|\; |\cos\theta|\leq \rho(k)\} \\
        	\sigma_p^{(k)} &= 
                \begin{cases} 
                \{e^{i \theta_0}(k)\} & \text{: $\sin(\alpha-\beta)\neq 0$, $k\notin \{\pi/2,3\pi/2\}$, }  \\ 
                \emptyset & \text{: otherwise.} 
                \end{cases}
        	\end{align}
	Moreover we have 
        	\begin{equation}
                \sigma(\Gamma)=\bigcup_{k\in[0,2\pi)}\left( \sigma_c^{(k)} \sqcup \sigma_p^{(k)} \right). 
                \end{equation}
Here 
	\begin{align}
        \rho(k) & := \sqrt{1-|\eta(k)|^2}, \label{rh}\\
        m_0(k) & := m_0
        =\begin{cases}
        |\mathrm{R}(\eta(k))|/\sqrt{1-\mathrm{Im} ^2(\eta(k))} & \text{: $\rho(k)\neq 0$} \\
        1 & \text{: $\rho(k)=1$,}
        \end{cases} \label{mass} \\
        \theta_0(k) & :=
        \begin{cases}
        \arcsin (-\mathrm{Im} (\eta(k))) & \text{: $\mathrm{R}(\eta(k))\geq 0$,} \\ 
        \pi-\arcsin(-\mathrm{Im}(\eta(k))) & \text{: $\mathrm{R}(\eta(k))< 0$,}
        \end{cases} \label{theta_0}
        \end{align}
\end{lemma}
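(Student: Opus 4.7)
The plan is to exploit two structural features of $\mathcal{C}_k$: the Verblunsky sequence $(\eta(k),0,\eta(k),0,\ldots)$ is $2$-periodic, and at every second site the parameter vanishes so that the corresponding local coin is a pure swap. This puts the continuous part of the spectrum inside the reach of the standard Floquet analysis for periodic CMV matrices, while the point part falls under the birth-and-death dichotomy developed in Sections~2 and 3.

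For $\sigma_c^{(k)}$ I would use the Szeg\H{o} transfer matrix at spectral parameter $z=e^{i\theta}$, form the monodromy as the product of the two $2\times 2$ factors associated with $\eta(k)$ and $0$, and identify the absolutely continuous band as the set where the trace lies in $[-2,2]$. A direct computation, using $|\eta(k)|^2+\rho(k)^2=1$, should collapse this trace condition into $|\cos\theta|\le \rho(k)$, which is the advertised set.

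For $\sigma_p^{(k)}$ I would convert $\mathcal{C}_k$ into an extended coined-Szegedy walk on $\mathbb{Z}_+$ in the sense of Section~2, absorbing the trivial swap coins by a position relabelling so that the resulting coins all have common imaginary part $\mathrm{Im}(\eta(k))$ and hence satisfy the renormalised Assumption~1'. The formulas for $p_j,q_j$ of Section~3 then give an explicit birth-and-death process, and Theorem~\ref{thm:RW} reduces the existence of a point spectrum to the evaluation of $c_R$ and $c_T$. A short calculation shows that the process is not null recurrent exactly when $\mathrm{Im}(\eta(k))\neq 0$, i.e., when $\sin(\alpha-\beta)\neq 0$ and $k\notin\{\pi/2,3\pi/2\}$. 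In that regime Proposition~\ref{prop:1} or Proposition~\ref{prop:2} produces a single eigenvector (positive recurrent or transient flow, depending on the sign of $\mathrm{Re}(\eta(k))$), and tracing the rotation of Remark~\ref{rem:rotate}(2) back through the conversion yields the eigenvalue $e^{i\theta_0(k)}$ with the branch choice dictated by $\mathrm{Re}(\eta(k))\gtreqless 0$, as stated. In the complementary regime $c_R=c_T=\infty$ and $\sigma_p^{(k)}=\emptyset$. The identity $\sigma(\Gamma)=\bigcup_k\sigma(\mathcal{C}_k)$ then follows from the direct integral decomposition of the preceding proposition.

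The main obstacle I anticipate is the conversion step: making rigorous the identification of $\mathcal{C}_k$, which has a zero Verblunsky parameter at every second site, with a coined walk on a half line whose coin satisfies Assumption~1' at every vertex. One must check that folding the swap coins into the shift does not introduce spurious eigenvalues near the self-loop at the origin and that the eigenvector supplied by Proposition~\ref{prop:1} or~\ref{prop:2} is genuinely $\ell^2$ after undoing the unitary conjugation. Once this bookkeeping is under control, the Floquet computation of $\sigma_c^{(k)}$ and the recurrence tests for $\sigma_p^{(k)}$ are routine.
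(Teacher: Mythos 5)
You should first note that the paper does not prove this lemma at all: it is imported verbatim from \cite{EKOS}, where it is obtained by the CGMV spectral analysis of $\mathcal{C}_k$; the recurrence route you propose for the point part is essentially what the paper itself does later in Theorem~\ref{thm:dispersion}, and your Floquet computation for the band is a reasonable substitute (the trace of the period-two monodromy indeed collapses to $|\cos\theta|\le\rho(k)$). However, the central step of your point-spectrum argument is wrong. For the constant-coin walk at fixed $k$ one has, by Section~3, $p=\tfrac12\bigl(1-\mathrm{Re}(\eta(k))/\sqrt{1-\kappa^2}\bigr)$ and $q=\tfrac12\bigl(1+\mathrm{Re}(\eta(k))/\sqrt{1-\kappa^2}\bigr)$ with $\kappa=\mathrm{Im}(\eta(k))$, so $c_R=\sum_j (p/q)^j$ and $c_T=\sum_j(q/p)^{j+1}$; by Theorem~\ref{thm:RW} the chain is positive recurrent iff $\mathrm{Re}(\eta(k))>0$, transient iff $\mathrm{Re}(\eta(k))<0$, and null recurrent iff $\mathrm{Re}(\eta(k))=0$. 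Your claim that the chain ``is not null recurrent exactly when $\mathrm{Im}(\eta(k))\neq 0$'' is false: the imaginary part only fixes the common phase $\kappa$ in Assumption~1' and cancels out of $p/q$. The subsequent ``i.e.'' is also wrong on its own terms, since $\mathrm{Im}(\eta(k))=\sin(\alpha-\beta)\sin k$ vanishes at $k\in\{0,\pi\}$, whereas $\mathrm{Re}(\eta(k))=-\sin(\alpha+\beta)\cos k$ is what vanishes at $k\in\{\pi/2,3\pi/2\}$ (or when $\sin(\alpha+\beta)=0$). With the correct criterion the recurrence route gives a point mass exactly when $\sin(\alpha+\beta)\cos k\neq0$ --- consistent with Theorem~\ref{thm:dispersion}, Fig.~\ref{fig:dispersion}, and with the fact that the candidate eigenvalue leaves the band precisely then, because $|\cos\theta_0(k)|=\sqrt{1-\kappa^2}>\rho(k)$ iff $\mathrm{Re}(\eta(k))\neq0$ --- and you must then confront the discrepancy with the lemma's written condition ``$\sin(\alpha-\beta)\neq0$'' (a convention/typo issue relative to \cite{EKOS}) rather than have it silently absorbed by an incorrect equivalence.

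Two further points. First, neither the emptiness of $\sigma_p^{(k)}$ in the null-recurrent case nor the fact that $\sigma_p^{(k)}$ is at most a singleton follows from Theorem~\ref{thm:main} or Propositions~\ref{prop:1}--\ref{prop:2}, which only give sufficiency of existence of one eigenvector; the paper supplements this with the at-most-one-eigenvalue property of $P_k$ from \cite{Iga-Oba} and the structure of $U$ on $\mathcal{L}^\perp$, and your sketch needs some such ingredient (or the explicit CMV computation of \cite{EKOS}) to conclude ``$\emptyset$ otherwise'' and exactly one point mass in the remaining regime. Second, the conversion step you flag as the main obstacle is not an obstacle: the interleaving of zero Verblunsky parameters is precisely the CGMV correspondence between $\mathcal{C}_k$ and the flip-flop coined walk on the half line, already encoded in the $\Lambda_k$ conjugation of the preceding proposition, and the eigenvectors of Propositions~\ref{prop:1} and \ref{prop:2} survive this diagonal unitary conjugation unchanged in norm.
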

Defining $\theta_c(k)$ as $\arccos(\rho(k))$, that is,  
		\begin{equation}\label{acpart} 
                \theta_c(k)=\arccos \left(\sqrt{\cos^2(\alpha-\beta)-\sin 2\alpha\sin 2\beta \cos^2k}\right), 
                \end{equation}
we find that the continuous spectrum of $\mathcal{C}_k$ is $\{e^{i\theta} \;|\; \theta\in [\theta_c(k),\pi-\theta_c(k)]\cup [\pi+\theta_c(k),2\pi-\theta_c(k)]\}$, 
and the point spectrum is $\{\theta_0(k)\}$ if it exists. 
Then the the dispersion relations between the wave number $k$ vs quasi-energy $\theta$ is described as follows. 
\begin{proposition}(\cite{EKOS})
The dispersion relation for bulk and edge states, $Bu$ and $Ed$, are
	\begin{align} 
        Bu &= \bigcup_{k\in[0,2\pi)} [\theta_c(k),\pi-\theta_c(k)] \cup [\pi+\theta_c(k),2\pi-\theta_c(k)] \label{bulk} \\
        Ed &= \begin{cases}
                \{ (k,\theta_0(k)) \;|\; k\in[0,2\pi)\setminus \{\pi/2,3\pi/2\} \} & \text{: $\sin(\alpha-\beta)\neq 0$, } \\ 
                \emptyset & \text{: $\sin(\alpha-\beta)=0$, }
              \end{cases}  \label{edge}
        \end{align}
\end{proposition}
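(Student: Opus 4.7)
The plan is to upgrade the pointwise (fixed-$k$) spectral decomposition $\sigma(\mathcal{C}_k)=\sigma_c^{(k)}\sqcup\sigma_p^{(k)}$ from the preceding Lemma to the global dispersion relation by taking the union over $k\in[0,2\pi)$, and then to rewrite each piece in the explicit trigonometric form required by (\ref{bulk}) and (\ref{edge}). Since the Lemma already supplies $\sigma(\Gamma)=\bigcup_k(\sigma_c^{(k)}\sqcup\sigma_p^{(k)})$, the proof is essentially a reformatting together with one trigonometric simplification.

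First, for the bulk part, I would convert $\sigma_c^{(k)}=\{e^{i\theta}:|\cos\theta|\le\rho(k)\}$ into its angular description. Because $\cos\theta$ is monotone on $[0,\pi]$ and on $[\pi,2\pi]$, the inequality $|\cos\theta|\le\rho(k)$ with $\theta_c(k):=\arccos\rho(k)\in[0,\pi/2]$ is equivalent to $\theta\in[\theta_c(k),\pi-\theta_c(k)]\cup[\pi+\theta_c(k),2\pi-\theta_c(k)]$. Taking the union over $k$ reproduces (\ref{bulk}) once one verifies the explicit trigonometric form (\ref{acpart}) for $\theta_c(k)$. For the latter, from $\eta(k)=-\sin(\alpha+\beta)\cos k+i\sin(\alpha-\beta)\sin k$ one has $|\eta(k)|^2=\sin^2(\alpha+\beta)\cos^2 k+\sin^2(\alpha-\beta)\sin^2 k$, so
\[
\rho(k)^2=1-|\eta(k)|^2=\cos^2(\alpha-\beta)+[\sin^2(\alpha-\beta)-\sin^2(\alpha+\beta)]\cos^2 k,
\]
and applying $\sin^2 A-\sin^2 B=\sin(A+B)\sin(A-B)$ with $A=\alpha-\beta$, $B=\alpha+\beta$ simplifies this to $\rho(k)^2=\cos^2(\alpha-\beta)-\sin 2\alpha\sin 2\beta\cos^2 k$, which matches (\ref{acpart}).

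Second, for the edge part, I would simply collect the pointwise point spectra furnished by the Lemma. When $\sin(\alpha-\beta)\neq 0$, the Lemma yields $\sigma_p^{(k)}=\{e^{i\theta_0(k)}\}$ for every $k\in[0,2\pi)\setminus\{\pi/2,3\pi/2\}$ and $\sigma_p^{(k)}=\emptyset$ at $k\in\{\pi/2,3\pi/2\}$; assembling the pairs $(k,\theta_0(k))$ over the admissible $k$ produces (\ref{edge}). When $\sin(\alpha-\beta)=0$, the Lemma forces $\sigma_p^{(k)}=\emptyset$ for every $k$, hence $Ed=\emptyset$.

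There is no substantial obstacle here: the Proposition is essentially a repackaging of the preceding Lemma together with one trigonometric identity. The only step that requires genuine computation is the verification of (\ref{acpart}) via $\sin^2 A-\sin^2 B=\sin(A+B)\sin(A-B)$; the remaining work consists of bookkeeping over the parameter $k$ and interpreting the arc $\{e^{i\theta}:|\cos\theta|\le\rho(k)\}$ as the union of two intervals symmetric about $\pi$.
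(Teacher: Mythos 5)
Your proposal is correct and matches the paper's own treatment: the paper presents this Proposition (citing \cite{EKOS}) as an immediate repackaging of the preceding Lemma, with the text just before the statement performing exactly your conversion of $\{e^{i\theta}\;|\;|\cos\theta|\leq\rho(k)\}$ into the two intervals via $\theta_c(k)=\arccos\rho(k)$ and the same trigonometric simplification of $\rho(k)^2$ leading to (\ref{acpart}). Your verification of the identity $\sin^2(\alpha-\beta)-\sin^2(\alpha+\beta)=-\sin 2\alpha\sin 2\beta$ and the collection of the point spectra over $k$ are exactly what is needed, so nothing is missing.
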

Remark that $\theta_0(k)$ is a monotone function and having two jumps at $k=\pi/2$ and $k=3\pi/2$. 
See Fig.~\ref{fig:dispersion} for $(\alpha, \beta)=(5\pi/4,\pi/6)$ case. 

\subsection{Simple proof of the edge state using the recurrence properties of the underlying random walk}
The reduced quantum walk in the Fourier space for fixed $k$ is the quantum walk with the self loop at the origin whose quantum coin assigned at $j$ is 
	\[ C_j=\begin{bmatrix} -\eta(k) & \rho(k) \\ \rho(k) & \bar{\eta}(k) \end{bmatrix}. \]
Then we have 
	\[ p:=p_j=\frac{1}{2}\left(1-\frac{\mathrm{Re}(\eta(k))}{\sqrt{1-\kappa^2}}\right),\;q:=q_j=\frac{1}{2}\left(1+\frac{\mathrm{R}(\eta(k))}{\sqrt{1-\kappa^2}}\right). \]
Let $P_k$ be the transition operator of underlying random walk on the half line parameterized by $k$. 
From \cite{Iga-Oba}, the number of eigenvalues of $P_k$ is at most $1$. 
We can easily to see that
	\begin{enumerate}
        \item $\mathrm{Re}(\eta(k))>0$ iff $P_k$ is positive recurrent; 
        \item $\mathrm{Re}(\eta(k))=0$ iff $P_k$ is null recurrent; 
        \item $\mathrm{Re}(\eta(k))<0$ off $P_k$ is transient.
        \end{enumerate}
Then just checking the recurrent properties of the underlying random walk, 
we can compute the edge state without a direct spectral analysis on the CMV matrix, and 
we can also obtain the eigenvectors as follows: 
\begin{theorem}\label{thm:dispersion}
Put $\kappa=\mathrm{Im} (\eta(k))$. Let $\sigma_p(\mathcal{C}_k)$ be the set of point spectrum of the CMV matrix $\mathcal{C}_k$. 
\begin{enumerate}
\item If $P_k$ is positive recurrent which is equivalent to $\mathrm{Re}(\eta(k))>0$, 
then $\sigma_p(\mathcal{C}_k)=\{e^{i\arccos(-\kappa)}\}$, $\ker(e^{i\arccos(-\kappa)}-\mathcal{C}_k)=\mathbb{C}\psi_*$ and $m=1/||\psi_*||^2=1/(q-p)$. 
\item If $P_k$ is transient which is equivalent to $\mathrm{Re}(\eta(k))<0$, 
then $\sigma_p(\mathcal{C}_k)=\{e^{i(\pi-\arccos(-\kappa))}\}$, $\ker(e^{i(\pi-\arccos(-\kappa))}-\mathcal{C}_k)=\mathbb{C}\xi_*$ and $m=1/||\xi_*||^2=1/(p-q)$.
\item If $P_k$ is null recurrent which is equivalent to $\mathrm{Re}(\eta(k))=0$, 
then $\sigma_p(\mathcal{C}_k)=\emptyset$. 
\end{enumerate} 
\end{theorem}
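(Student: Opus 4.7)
My strategy is to specialize the results of Section~3 to the reduced walk fiber at each fixed wave number $k$. This fiber is a half-line quantum walk whose coin $C_j$ is uniform in $j$ with constant Verblunsky parameter $\eta_j \equiv \eta(k)$, so the induced birth--death process $P_k$ has constant transition probabilities
\[ p = \tfrac{1}{2}\Bigl(1 - \tfrac{\mathrm{Re}\,\eta(k)}{\sqrt{1-\kappa^2}}\Bigr), \qquad q = \tfrac{1}{2}\Bigl(1 + \tfrac{\mathrm{Re}\,\eta(k)}{\sqrt{1-\kappa^2}}\Bigr). \]

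First I would derive the three recurrence equivalences. Under constant $p,q$ the sums $c_R = \sum_{j\geq 0}(p/q)^j$ and $c_T = \sum_{j\geq 1}(q/p)^j$ in Theorem~\ref{thm:RW} become geometric, converging iff $p<q$ or $p>q$ respectively. These inequalities translate via the formulas above to $\mathrm{Re}\,\eta(k) > 0$ (positive recurrent), $\mathrm{Re}\,\eta(k) < 0$ (transient), and the critical case $\mathrm{Re}\,\eta(k) = 0$ (null recurrent).

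Next I would apply Propositions~\ref{prop:1} and~\ref{prop:2} directly. They produce $\psi_*$ and $\xi_*$ in the positive recurrent and transient cases, and because the coin is uniform these are geometric sequences whose $\ell^2$-norms reduce to single geometric series:
\[ \|\psi_*\|^2 = 1 + 2\sum_{j\geq 1}(p/q)^j = \frac{p+q}{q-p} = \frac{1}{q-p}, \qquad \|\xi_*\|^2 = \frac{1}{p-q}. \]
This yields the mass formulas $m = 1/\|\psi_*\|^2 = q-p$ and $m = 1/\|\xi_*\|^2 = p-q$. In the null recurrent case both sums diverge, so neither formally defined eigenvector lies in $\ell^2(\mathbb{Z}_+)$; combined with the cited fact that $P_k$ has at most one eigenvalue, this excludes any remaining candidate and gives $\sigma_p(\mathcal{C}_k) = \emptyset$.

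Finally I would identify the eigenvalues. Propositions~\ref{prop:1}--\ref{prop:2} yield eigenvalues $-ie^{\pm i\phi}$ for the fiber walk $\Gamma_k$, with $\phi = \arccos\kappa$. The unitary identification $\Gamma_k = \Lambda_k^{-1}\mathcal{C}_k^\top\Lambda_k$ recorded in the EKOS Proposition gives $\mathrm{Spec}(\mathcal{C}_k) = \mathrm{Spec}(\Gamma_k)$, and comparison with the explicit $\theta_0(k)$ formula in the EKOS Lemma supplies the stated form $e^{i\arccos(-\kappa)}$ and $e^{i(\pi-\arccos(-\kappa))}$. I expect the main obstacle to lie precisely here, in the phase bookkeeping: one must consistently compose the global rotation between the Assumption~1 and Assumption~1$'$ conventions (Remark~\ref{rem:rotate}), the transposition induced by the Fourier reduction, and the boundary normalization at the self-loop, and verify that the composition lands on the stated eigenvalues.
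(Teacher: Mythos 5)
Your plan follows the paper's own route for cases (1) and (2): the paper's proof is exactly ``apply Propositions~\ref{prop:1} and \ref{prop:2} with $\eta_j\equiv\eta(k)$,'' with the recurrence equivalences read off from Theorem~\ref{thm:RW} (your geometric-series computation is the ``easily seen'' step the paper omits), and your evaluation $\|\psi_*\|^2=(p+q)/(q-p)=1/(q-p)$, $m=q-p$ is the correct reading of the mass formula and matches $m_0(k)=|\mathrm{Re}(\eta(k))|/\sqrt{1-\mathrm{Im}^2(\eta(k))}$ from the quoted lemma. Where you genuinely diverge is case (3): the paper argues that at $\mathrm{Re}(\eta(k))=0$ the walk degenerates into a ``free'' walk with no point spectrum, whereas you exclude eigenvalues by showing the two explicit flow candidates are not $\ell^2$ and then invoking the at-most-one-eigenvalue fact. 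Your version is closer in spirit to Section~3, but as stated it has a small hole: non-square-summability of $\psi_*$ and $\xi_*$ only disposes of the $\mathcal{L}^\perp$ sector, while an eigenvalue of $U|_{\mathcal{L}}$ would come, via the spectral map, from an eigenvalue of $T$ (equivalently $P_k$) possibly strictly inside the band; ``$P_k$ has at most one eigenvalue'' does not by itself exclude that one, and you need the complementary statement the paper records explicitly, namely that if $P_k$ has an eigenvalue then the walk is positive recurrent, which is ruled out in the null-recurrent regime. The same remark is what pins down that $\sigma_p(\mathcal{C}_k)$ is exactly a singleton in cases (1) and (2), so it should be invoked there too, not only in case (3).

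The other point to flag is the eigenvalue identification, which you explicitly defer (``verify that the composition lands on the stated eigenvalues''). The paper does no more than you do here --- it simply asserts the conclusion follows from the propositions --- but note that Propositions~\ref{prop:1} and \ref{prop:2} give the phases $-ie^{i\phi}=e^{i\arcsin(-\kappa)}$ and $ie^{-i\phi}=e^{i\arcsin\kappa}$ with $\cos\phi=\kappa$, which agree with the $\theta_0(k)$ of the CMV lemma only up to the convention/typo issues in the theorem's $\arccos(-\kappa)$ expressions; since unitary equivalence (the $\Lambda_k$ conjugation and the minimal-representation gauge) preserves spectrum, this final step is pure bookkeeping, but it is the one place where ``comparison with $\theta_0(k)$'' must actually be carried out rather than announced if your write-up is to be complete.
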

\begin{proof}
Remark that the number of eigenvalues of $P_k$ is at most $1$, and if the eigenvalue exists, then the underlying random walk is positive recurrent. 
For the cases (1) and (2), applying Propositions~\ref{prop:1} and \ref{prop:2} for $\eta_j=\eta(k)$ case, we obtain the desired conclusion.  
For the case (3), since $C_j=e^{-i\kappa}I_2$, the walk becomes a free walk. Then the point spectrum does not exist. 
\end{proof}
It is easily computed that $\mathrm{Re}(\eta(k))>0$ if and only if $k\in(0,\pi/2)\cup (3\pi/2,2\pi)$ and 
$\mathrm{Re}(\eta(k))<0$ if and only if $k\in(\pi/2,3\pi/2)$ and $\mathrm{Re}(\eta(k))=0$ if and only if $k\in\{0,\pi/2\}$ in the region of $k\in[0,2\pi)$. 
See Fig.~\ref{fig:dispersion} for $(\alpha,\beta)=(5\pi/4,\pi/6)$ case, the red curve which has jumps at $k=\pi/2$ and $k=2\pi/3$ 
depicts $\sigma_p(\mathcal{C}_k)$ as a function of $k\in[0,2\pi)$ following Theorem~\ref{thm:dispersion}. 
\begin{figure}[!ht]
\begin{center}
	\includegraphics[width=100mm]{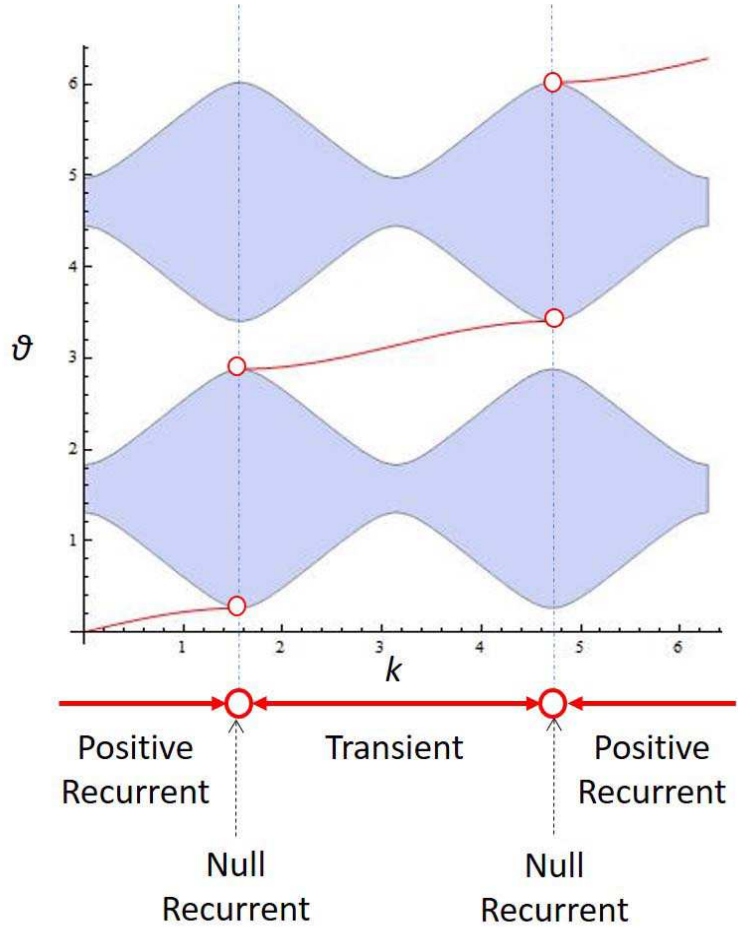}
\end{center}
\caption{ The dispersion relation between the wave number $k$ and quasi-energy $\theta$. 
The red curve depicts the edge state and the blue region depicts the bulk state. 
The region $k\in [0,\pi/2)\cup (3\pi/2, 2\pi]$ corresponds to the positive recurrent part of the underlying random walk while
the region $k\in (\pi/2,3\pi/2)$ corresponds to the transient part, and the boundaries; that is, $k\in\{\pi/2,3\pi/2\}$ correspond to the null recurrent part. 
}
\label{fig:dispersion}
\end{figure}

\noindent \\
\noindent {\bf Acknowledgements} \\
YI was supported by the Grant-in-Aid for Young Scientists (B) of Japan Society for the Promotion of Science (Grant No.~16K17652).
ES acknowledges financial support from the Grant-in-Aid of Scientific Research (C) Japan Society for the Promotion of Science (Grant No.~19K03616).


\appendix
\def\thesection{Appendix \Alph{section}}
\renewcommand{\theequation}{A.\arabic{equation}}
\setcounter{equation}{0}


\section{Computation of $\ker(C_j+ie^{i\phi})$}\label{pfalternative}
The $(-ie^{i\phi})$-eigenvector of $C_j$ is expressed by 
\[ \ker(C_j+ie^{i\phi})=\mathbb{C}\begin{bmatrix} \rho_j \\ \eta_j-ie^{i\phi} \end{bmatrix}. \]
Recall that $\cos\phi=\kappa=\mathrm{Im}(\eta_j)$ and $e^{i\phi}=\kappa+i\sqrt{1-\kappa^2}$. 
Then each element of the eigenvector can be deformed by 
	\[ \rho_j=\sqrt{ (1-\kappa^2)-\mathrm{Re}(\eta_j)^2 }=A_+A_-, \; \eta_j-ie^{i\phi}
        	 =\sqrt{ 1-\kappa^2 }+\mathrm{Re}(\eta_j)  =A_+^2, \]
respectively. 
Here we put 
	\[ A_\pm :=\sqrt{\sqrt{1-\kappa^2}\pm \mathrm{Re}(\eta_j)}. \]
Therefore the normalized eigenvector is expressed by 
	\[ \frac{1}{\sqrt{A_+^2+A_-^2}}\; {}^T[A_-\; A_+]={}^T[\sqrt{p_j}\; \sqrt{q_j}], \] where  
	\[ p_j=\frac{1}{2}\left( 1-\frac{\mathrm{Re}(\eta_j)}{\sqrt{1-\kappa^2}} \right),\;q_j=\frac{1}{2}\left( 1+\frac{\mathrm{Re}(\eta_j)}{\sqrt{1-\kappa^2}} \right). \]
\end{document}